\long\def\comment#1{}
\newcommand{\beq}{\begin{equation}}
\newcommand{\eeq}{\end{equation}}
\newcommand{\beqno}{\begin{equation*}}
\newcommand{\eeqno}{\end{equation*}}
\newcommand{\bes}{\begin{split}}
\newcommand{\ees}{\end{split}}
\newcommand{\bdm}{\begin{displaymath}}
\newcommand{\edm}{\end{displaymath}}
\newtheorem{theorem}{Theorem}
\newtheorem{claim}{Claim}
\newtheorem{lemma}{Lemma}
\newtheorem{definition}{Definition}
\newcommand{\bd}{\begin{definition}}
\newcommand{\ed}{\end{definition}}
\newcommand{\bv}{\begin{vugraph}}
\newcommand{\ev}{\end{vugraph}}
\newcommand{\bi}{\begin{itemize}}
\newcommand{\ei}{\end{itemize}}
\newcommand{\ben}{\begin{enumerate}}
\newcommand{\een}{\end{enumerate}}
\newcommand{\bean}{\begin{eqnarray*} }
\newcommand{\eean}{\end{eqnarray*} }
\newcommand{\bea}{\begin{eqnarray} }
\newcommand{\eea}{\end{eqnarray} }
\newcommand{\ba}{\begin{array} }
\newcommand{\ea}{\end{array} }
\begin{document}

\title{Fast Algorithm for Finding Unicast Capacity of Linear Deterministic Wireless Relay Networks}

\author{\authorblockN{Cuizhu Shi and Aditya Ramamoorthy}\\
\vspace{-3mm}
\authorblockA{Department of Electrical and Computer Engineering, Iowa State University, Ames, Iowa 50011\\
Email: \{cshi, adityar\}@iastate.edu}
}

\maketitle

\begin{abstract}

The deterministic channel model for wireless relay networks proposed by Avestimehr, Diggavi and Tse `07 has captured the broadcast and inference nature of wireless communications and has been widely used in approximating the capacity of wireless relay networks. The authors generalized the max-flow min-cut theorem to the linear deterministic wireless relay networks and characterized the unicast capacity of such deterministic network as the minimum rank of all the binary adjacency matrices describing source-destination cuts whose number grows exponentially with the size of the network.
In this paper, we developed a fast algorithm for finding the unicast capacity of a linear deterministic wireless relay network by finding the maximum number of linearly independent paths using the idea of path augmentation. We developed a modified depth-first search algorithm
tailored for the linear deterministic relay networks for finding linearly independent paths whose total number proved to equal the unicast capacity of the underlying network. The result of our algorithm suggests a capacity-achieving transmission strategy with one-bit length linear encoding at the relay nodes in the concerned linear deterministic wireless relay network. The correctness of our algorithm for universal cases is given by our proof in the paper. Moreover, our algorithm has a computational complexity bounded by $O(|{\cal{V}}_x|\cdot C^4+d\cdot |{\cal{V}}_x|\cdot C^3)$ which shows a significant improvement over the previous results for solving the same problem by Amaudruz and Fragouli (whose complexity is bounded by $O(M\cdot |{\cal{E}}|\cdot C^5)$ with $M\geq d$ and $|{\cal{E}}|\geq|{\cal{V}}_x|$) and by Yazdi and Savari (whose complexity is bounded by $O(L^8\cdot M^{12}\cdot h_0^3+L\cdot M^6\cdot C\cdot h_0^4)$ with $h_0\geq C$).

\end{abstract}

\section{Introduction\label{introduction}}

The complex signal interactions in wireless relay networks challenge the study of wireless information flow for many years.
To characterize the capacity and capacity-achieving transmission schemes for wireless relay networks still remain open questions. Towards this end, the deterministic channel model for wireless relay networks proposed by Avestimehr, Diggavi and Tse \cite{amir2007_deterministicmodel} has been a significant progress.
The broadcast and inference are two fundamental features of wireless communications. The deterministic channel model captures the broadcast and inference features of wireless communications in addition to converting the wireless relay networks into deterministic networks. Studying the information flow in the deterministic networks provides a way to find out the approximated capacity and corresponding transmission strategies for original wireless relay networks.

Gaussian channel has been the most widely used channel model for the link channels in wireless relay networks. The deterministic channel model quantizes the transmitted signal into different bit levels and at the receiver keeps the signal bit levels above the noise level (which depends on the signal to noise ratio (SNR) of the channel) so as to convert the original Gaussian channel into a deterministic channel without random noise variables. The broadcasting of signal at the transmitter is still preserved in the deterministic channel and the interference of signal at the receiver is modeled by modulo two sum of the bits arrived at the same signal level.

Now we introduce the deterministic channel model by using the example of a point-to-point AWGN channel from \cite{amir2007_deterministicmodel}. Consider an AWGN channel $y=hx+z$ where $z\sim{\cal{N}}(0,1)$ (${\cal{N}}$ represents Gaussian distribution) and $h=\sqrt{{\mbox{SNR}}}$. Assume $x$ and $z$ are real numbers, then we can write $y\approx 2^n\sum_{i=1}^nx(i)2^{-i}+\sum_{i=1}^\infty (x(i+n)+z(i))2^{-i}$ where $n=\lceil\frac{1}{2}\log {\mbox{SNR}}\rceil$. If we think of the transmitted signal $x$ as a sequence of bits at different signal levels, then the deterministic channel model truncates $x$ and passes only its bits above noise level. Fig. \ref{fig1} gives a concrete example. At the transmitter node $T_x$ and receiver node $R_x$, each small cycle represents a signal level. Assume $n=4$, so only the first four most significant signal levels or bits from $T_x$ are received at $R_x$. Accordingly each edge in the model can transmit one-bit information at a time.

\begin{figure}[htbp]
\centering
\includegraphics[scale=0.4]{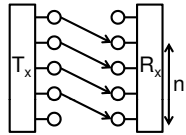}
\caption[deterministic model]{An example of the deterministic channel model for a point-to-point Gaussian channel. }
\label{fig1}
\end{figure}

The deterministic channel model we discussed above is called linear finite-field deterministic channel model in \cite{amir2007_wirelessnetworkinfoflow}, which is referred to as linear deterministic channel model in this paper. In \cite{amir2007_deterministicmodel}\cite{amir2007_wirelessnetworkinfoflow}, the unicast (i.e., with one source $S$ and one destination $D$) capacity $C$ of the linear deterministic wireless relay networks was characterized as the minimum rank of all the binary adjacency matrices describing $S-D$ cuts. Since the total number of such cuts grows exponentially with the size of the network, an exhaustive search for the minimum rank of the adjacency matrix for these cuts results in an algorithm with complexity exponential in the size of the network. A polynomial-time algorithm is desirable for finding the unicast capacity of the linear deterministic wireless relay networks. We will see later that the path augmentation algorithms for graphs cannot be directly applied here because the definitions for the cut value for two cases are different (see the definition for the cut value in a linear deterministic wireless relay network in Section \ref{notation}). In this paper, we only consider linear deterministic wireless relay networks. Since an arbitrary deterministic wireless relay network can be unfolded over time to create a layered deterministic network through time-expansion technique \cite{amir2007_wirelessnetworkinfoflow}, our algorithm developed in this paper for layered networks also applies to general networks.

In \cite{aurore2009_combinatorial_algo_deterministic}, Amaudruz and Fragouli proposed a polynomial-time algorithm for finding the unicast capacity of a linear deterministic wireless relay network by trying to identify the maximum number of linearly independent paths in the network using the idea of path augmentation. In \cite{sadegh2009_combinatorialstudyofdeterministic}, Yazdi and Savari developed a two-dimensional Rado-Hall transversal theorem for block matrices and used the submodularity of the capacity of a cut to formulate the problem as a linear program over the intersection of two polymatroids so as to solve the problem in polynomial time. Compared with these previous results, our algorithm has a significantly less computational complexity as explained in Section \ref{complexity}. Moreover, our algorithm comes with a more intuitive understanding as we explained in Section \ref{ouralgo}.

The paper is organized as follows. In Section \ref{notation}, we introduce the notations and definitions used throughout this paper. Section \ref{ouralgo} gives a detailed description of our algorithm for finding the unicast capacity of any given linear deterministic wireless relay network. Section \ref{analysis} is about the algorithm analysis including the proof of correctness and complexity analysis.
Section \ref{conclusion} concludes the paper.

\section{Notations and Definitions\label{notation}}

Let ${\cal{G=(N,V,E)}}$ be a layered deterministic wireless relay network \cite{amir2007_wirelessnetworkinfoflow} where ${\cal{N}}$ is the set of super nodes referring to the nodes in the original wireless relay network, ${\cal{V}}$ is the set of nodes referring to different signal levels incident to super nodes and ${\cal{E}}$ is the set of directed edges going from node to node. Denote ${\cal{V=V}}_x\cup {\cal{V}}_y$ with ${\cal{V}}_x=\{x: (x,y)\in {\cal{E}}\}$ and ${\cal{V}}_y=\{y: (x,y)\in {\cal{E}}\}$. We shall call ${\cal{V}}_x$ as transmitting nodes and ${\cal{V}}_y$ as receiving nodes. For example, in Fig. \ref{fig1}, ${\cal{N}}=\{T_x, R_x\}$ and $T_x$ has five transmitting nodes and $R_x$ has five receiving nodes. In a layered network, all paths from the source to the destination have equal lengths \cite{amir2007_wirelessnetworkinfoflow}, so we can divide the set of super nodes into different layers each layer containing only super nodes with the same distance to the source. Assume there are $L$ layers of super nodes in ${\cal{G}}$ and $M$ is the maximum number of super nodes in each layer. The source super node $S$ consists the first layer and the destination super node $D$ consists the last layer. Let ${\cal{N}}(x_i)$ (or ${\cal{N}}(y_j)$) denote the super node where a transmitting node $x_i$ (or a receiving node $y_j$) belongs to. Let ${\cal{L}}(N)$ (or ${\cal{L}}(x_i)$, ${\cal{L}}(y_j)$) denote the layer number where super node $N$ (or $x_i$, $y_j$) belongs to. In a layered network, if $(x,y)\in {\cal{E}}$, $(x',y')\in {\cal{E}}$ and ${\cal{L}}(x)={\cal{L}}(x')$, then we must have ${\cal{L}}(y)={\cal{L}}(y')={\cal{L}}(x)+1={\cal{L}}(x')+1$.

A cut, $\Omega$, in a deterministic relay network ${\cal{G}}$ is a partition of the super nodes ${\cal{N}}$ (together with their incident nodes) into two disjoint sets $\Omega$ and $\Omega^c$ such that $S\in\Omega$ and $D\in\Omega^c$.
For convenience, we call a cut a layer cut if all edges across the cut are emanating from nodes belonging to the same layer, otherwise we call it a cross-layer cut. We say an edge $(x_i,y_j)\in {\cal{E}}$ belongs to layer cut $l$ if ${\cal{L}}(x_i)=l$. For a layered deterministic network, there are exactly $L-1$ layer cuts.

The adjacency matrix $T$ for a set of transmitting nodes $\textbf{x}=\{x_1,x_2,...x_m\}, x_i\in {\cal{V}}_x$, and a set of receiving nodes $\textbf{y}=\{y_1,y_2,...y_n\},y_i\in {\cal{V}}_y$ in a deterministic relay network is a binary matrix of size $m\times n$ with rows corresponding to $\{x_i,x_i\in \textbf{x}\}$ and columns corresponding to $\{y_i,y_i\in\textbf{y}\}$ and $T(i,j)=1$ if $(x_i,y_j)\in {\cal{E}}$.
The adjacency matrix for a set of $k$ edges is the binary adjacency matrix for the set of their transmitting nodes and the set of their receiving nodes.

A set of $k$ edges are said to be linearly independent if the adjacency matrix for them has rank $k$, otherwise they are said to be linearly dependent. In a layered deterministic network, each $S-D$ path is of length $L-1$ and crosses each layer cut exactly once. A set of $k$ $S-D$ paths are said to be linearly independent if each set of their edges of size $k$ crossing each layer cut are linearly independent, otherwise they are said to be linearly dependent. Lemma \ref{lemm9} shows that $k$ linearly independent $S-D$ paths correspond to a transmission scheme of rate $k$.

In a deterministic wireless relay network, there are intermediate super nodes (exclude $S$ and $D$) corresponding to the relay nodes in the original wireless relay network which have both transmitting nodes and receiving nodes. It is shown in Lemma \ref{lemm9} and Theorem \ref{theorem1} that one-bit length linear encoding functions at the relay super nodes are sufficient for constructing capacity-achieving transmission schemes between $S$ and $D$ for the underlying linear deterministic relay network.

Let ${\cal{E}}_{\Omega}$ be the set of edges crossing the cut $\Omega$ in a linear deterministic relay network. The cut value of $\Omega$ in the linear deterministic relay network is defined as the rank of the binary adjacency matrix for ${\cal{E}}_{\Omega}$, which equals the number of linearly independent edges in ${\cal{E}}_{\Omega}$. Note that here the cut value defined for linear deterministic wireless relay networks is different from that for graphs (which is just the number of edges crossing the cut). It is proved \cite{amir2007_deterministicmodel}\cite{amir2007_wirelessnetworkinfoflow} that the unicast capacity of a linear deterministic wireless relay network is equal to the minimum cut value among all cuts.

\section{Our Algorithm\label{ouralgo}}

\subsection{Algorithm Outline}\label{description}

The max-flow min-cut theorem has been generalized to the linear deterministic wireless relay networks, but since the definitions of cut value for graphs (the number of edges crossing the cut) and for linear deterministic relay networks (the number of linearly independent edges crossing the cut) are different, the path augmentation algorithms for the latter are different from that for the former. As we will see later from this paper, the similarity is that in both cases the path augmentation algorithms operate in iterations and in each iteration they complete an additional path. The difference is that contrast to the simple addition of an available edge to a path in the path augmentation algorithms for graphs, the addition of edges to a path in a deterministic network has to satisfy some rank requirement of the adjacency matrix to avoid linear dependence among used path edges in each layer cut.

Our algorithm is basically a path augmentation algorithm for finding the maximum number $K$ of linearly independent $S-D$ paths in a layered linear deterministic relay network, where $K$ proves to be equal to $C$ in Section \ref{proof}. Moreover, the $K$ identified paths correspond to a capacity-achieving transmission strategy for the underlying deterministic network.

The algorithm operates in iterations. In iteration $k$, a modified depth-first search (MDFS) algorithm tailored for the linear deterministic relay networks is carried out on the graph ${\cal{G}}$ trying to find an $S-D$ path in addition to the $k-1$ paths found in the first $k-1$ iterations (stored in a structure ${\cal{P}}$) so that the $k$ found paths (stored in a structure ${\cal{P}}'$) are linearly independent. If MDFS returns True indicating that the $k$th $S-D$ path is found, then totally $k$ linearly independent paths have been found and the algorithm continues to iteration $k+1$. If it returns False indicating that no $S-D$ path is found, then no more independent path exists and the algorithm stops while those identified paths in ${\cal{P}}'$ suggest a capacity-achieving transmission scheme.

\subsection{Preliminaries}\label{definitions}

The following notations or structures are used in our algorithm. ${\cal{P}}$ and ${\cal{P}}'$ are structures storing information about the $k-1$ paths found in previous $k-1$ iterations and information about the updated $k-1$ paths and the partial path found in the current $k$th iteration respectively. Denote ${\cal{E}}_u$ (${\cal{E}}_u^i$) as the set of edges in ${\cal{E}}$ used by paths in ${\cal{P}}'$ (in layer cut $i$). The MDFS algorithm ensures that edges in ${\cal{E}}_u^i$ are linearly independent, i.e., rank$(T({\cal{E}}_u^i))=|{\cal{E}}_u^i|$.
Denote ${\cal{V}}_{xu}^{i}=\{x: (x,y)\in{\cal{E}}_u^i\}$ and ${\cal{V}}_{yu}^{i}=\{y: (x,y)\in{\cal{E}}_u^i\}$.
${\cal{E}}_u$ (${\cal{E}}_u^i$), ${\cal{V}}_{xu}^{i}$, ${\cal{V}}_{yu}^{i}$ and ${\cal{P}}'$ are initialized at the beginning of every iteration according to ${\cal{P}}$ and maintained by MDFS in the current iteration subject to changes. As we will notice in Section \ref{mdfs}, each move MDFS makes ensures that rank$(T({\cal{E}}_u^i))=|{\cal{E}}_u^i|$. Let ${\cal{E}}_{u,x_j}^{i}$, ${\cal{V}}_{xu,x_j}^{i}$ and ${\cal{V}}_{yu,x_j}^{i}$ denote the instantaneous sets of ${\cal{E}}_u^i$, ${\cal{V}}_{xu}^i$ and ${\cal{V}}_{yu}^i$ when a transmitting node $x_j$ is being explored in layer cut $i={\cal{L}}(x_j)$ by MDFS in a certain iteration. Furthermore, denote ${\cal{V}}_{x,N}=\{x: x\in {\cal{V}}_x$ and ${\cal{N}}(x)=N\}$ and ${\cal{V}}_{y,N}=\{y: y\in {\cal{V}}_y$ and ${\cal{N}}(y)=N\}$. Denote ${\cal{E}}_{{\cal{P}}}=\{e: e\in{\cal{E}}, e$ used by some paths in ${\cal{P}}\}$ and ${\cal{V}}_{x{\cal{P}}}=\{x: (x,y)\in {\cal{E}}_{{\cal{P}}}$ for some $y\}$ and ${\cal{V}}_{y{\cal{P}}}=\{y: (x,y)\in {\cal{E}}_{{\cal{P}}}$ for some $x\}$.


\begin{definition}
\textbf{${\cal{V}}_{xspan}^{x_j}$}: We define ${\cal{V}}_{xspan}^{x_j}$ as the set satisfies ${\cal{V}}_{xspan}^{x_j}\subseteq{\cal{V}}_{xu,x_j}^{i}$ and
\begin{eqnarray}\label{eqn21}
T(x_j,{\cal{V}}_{yu,x_j}^{i})=\sum_{x\in{\cal{V}}_{xspan}^{x_j}}T(x,{\cal{V}}_{yu,x_j}^{i})
\end{eqnarray}
Since $T({\cal{V}}_{xu,x_j}^{i},{\cal{V}}_{yu,x_j}^{i})$ has full rank, ${\cal{V}}_{xspan}^{x_j}$ is the unique such set.
\end{definition}

Let function Span$(x_j)$ be the function for computing ${\cal{V}}_{xspan}^{x_j}$.
In iteration $k+1$ of our algorithm, $|{\cal{V}}_{xu,x_j}^{i}|=k$.

\begin{lemma}\label{lemma1}
Let $|{\cal{V}}_{xu,x_j}^{i}|=k=|{\cal{V}}_{yu,x_j}^{i}|$. The computational complexity of Span$(x_j)$ for finding the set ${\cal{V}}_{xspan}^{x_j}$ is bounded by $O(k^3)$. For $\forall x\in{\cal{V}}_{xspan}^{x_j}$, rank$(T({\cal{V}}_{xu,x_j}^{i},{\cal{V}}_{yu,x_j}^{i}))=$ rank$(T({\cal{V}}_{xu,x_j}^{i}+x_j-x,{\cal{V}}_{yu,x_j}^{i}))=k$ and rank$(T({\cal{V}}_{xspan}^{x_j}+x_j,{\cal{V}}_{yu,x_j}^{i}))=$ rank$(T({\cal{V}}_{xspan}^{x_j},{\cal{V}}_{yu,x_j}^{i}))=$ rank$(T({\cal{V}}_{xspan}^{x_j}+x_j-x,{\cal{V}}_{yu,x_j}^{i}))=|{\cal{V}}_{xspan}^{x_j}|\leq k$.
\end{lemma}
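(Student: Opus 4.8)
The plan is to establish Lemma~\ref{lemma1} in two parts: first the structural rank identities, then the complexity bound for computing $\mathcal{V}_{xspan}^{x_j}$. Throughout I abbreviate the full-rank matrix $T(\mathcal{V}_{xu,x_j}^{i},\mathcal{V}_{yu,x_j}^{i})$ by $A$, a $k\times k$ invertible binary matrix over $\mathbb{F}_2$, and I regard each row $T(x,\mathcal{V}_{yu,x_j}^{i})$ as a vector in $\mathbb{F}_2^{k}$. The defining equation~(\ref{eqn21}) says exactly that the row indexed by $x_j$ is the sum (over $\mathbb{F}_2$) of the rows indexed by $\mathcal{V}_{xspan}^{x_j}$, so $\mathcal{V}_{xspan}^{x_j}$ records the support of $x_j$'s coordinate vector when the $k$ rows of $A$ are taken as a basis.

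For the rank identities I would argue purely in terms of this unique linear combination. Since the $k$ rows of $A$ are a basis of a $k$-dimensional space, swapping out any one basis row $x\in\mathcal{V}_{xspan}^{x_j}$ for $x_j$ keeps a basis: because $x$ appears with nonzero coefficient in the expansion of $x_j$, one can invert the relation to write $x$ as a combination of $x_j$ and the remaining rows, so $\mathcal{V}_{xu,x_j}^{i}+x_j-x$ still spans the full space and has rank $k$; this gives the first chain of equalities. For the second chain, note that the rows indexed by $\mathcal{V}_{xspan}^{x_j}$ are a subset of a basis, hence linearly independent, so $\mathrm{rank}\,T(\mathcal{V}_{xspan}^{x_j},\cdot)=|\mathcal{V}_{xspan}^{x_j}|$; adjoining $x_j$ adds nothing since $x_j$ is by definition in their span, and removing one $x\in\mathcal{V}_{xspan}^{x_j}$ while adding $x_j$ again preserves both the span of that sub-collection and its independence (the same swap argument restricted to the smaller basis), which yields the common value $|\mathcal{V}_{xspan}^{x_j}|\le k$. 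Each equality reduces to a one-line statement about replacing a vector in an independent set by a dependent combination in which it participates, so this part is essentially bookkeeping.

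For the complexity claim I would exhibit the computation explicitly: to find the coefficient vector expressing $T(x_j,\cdot)$ in the basis given by the rows of $A$, solve the linear system $c^\top A = T(x_j,\cdot)$ over $\mathbb{F}_2$, then set $\mathcal{V}_{xspan}^{x_j}$ to be the rows where $c$ is nonzero. Gaussian elimination on the $k\times k$ matrix $A$ (which can be reused, or recomputed) together with back-substitution for one right-hand side costs $O(k^3)$ bit operations, and reading off the support is $O(k)$; uniqueness of the solution is guaranteed by the full rank of $A$, matching the remark that $\mathcal{V}_{xspan}^{x_j}$ is unique. This gives the stated $O(k^3)$ bound.

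The part most likely to need care is making the swap argument fully rigorous over $\mathbb{F}_2$ rather than waving at it: I must verify that membership $x\in\mathcal{V}_{xspan}^{x_j}$ means precisely that $x$'s coefficient in the unique expansion of $x_j$ is $1$, and that this nonzero coefficient is exactly what lets the exchange lemma go through in both directions (removing $x$ and inserting $x_j$, and vice versa). The potential subtlety is that the two rank chains in the statement share the quantity $|\mathcal{V}_{xspan}^{x_j}|$, so I want to confirm that the expansion of $x_j$ is supported entirely within $\mathcal{V}_{xspan}^{x_j}$ and that restricting attention to this sub-basis is legitimate; once the exchange property is pinned to the nonzero-coefficient condition, every listed equality follows uniformly and no separate case analysis is required.
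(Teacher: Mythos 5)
Your proposal is correct and follows essentially the same route as the paper: the complexity bound is obtained exactly as the paper does, by viewing Span$(x_j)$ as solving the linear system $c^\top A = T(x_j,{\cal{V}}_{yu,x_j}^{i})$ over $GF(2)$ via Gaussian elimination in $O(k^3)$. The rank identities, which the paper dismisses as ``obvious,'' you justify with the standard basis-exchange argument (a nonzero coefficient of $x$ in the unique expansion of $x_j$ permits the swap in both chains); this is a legitimate filling-in of detail rather than a different approach, and it is sound.
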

\begin{proof}
To solve the set ${\cal{V}}_{xspan}^{x_j}$ is equivalent to solving the system of linear equations, ${\cal{V}}_{xspan}^{x_j}\cdot T({\cal{V}}_{xu,x_j}^{i},{\cal{V}}_{yu,x_j}^{i})=T(x_j,{\cal{V}}_{yu,x_j}^{i})$ in $GF(2)$ which can be accomplished in time $O(k^3)$ by using Gaussian elimination. The second statement is obvious.
\end{proof}

Consider the subgraph consisting of nodes $x_j\cup{\cal{V}}_{xu,x_j}^i\cup{\cal{V}}_{yu,x_j}^i$ and the edges connecting them in ${\cal{G}}$. Let ${\cal{G}}_{sub}^{x_j}$ denote the graph obtained by reversing the directions of the edges in ${\cal{E}}_{xu,x_j}^i$ in the above subgraph.

\begin{lemma}\label{lemma5}
There is a directed path from $x_j$ to any $x\in{\cal{V}}_{xspan}^{x_j}$ in graph ${\cal{G}}_{sub}^{x_j}$. Let FindIndPaths be the function for finding out all these $|{\cal{V}}_{xspan}^{x_j}|$ paths from $x_j$. The computational complexity of FindIndPaths$(x_j,{\cal{V}}_{xspan}^{x_j})$ is bounded by $O(k^2)$ in iteration $k$.
\end{lemma}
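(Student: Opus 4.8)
The plan is to recast directed reachability in ${\cal{G}}_{sub}^{x_j}$ as the existence of alternating paths with respect to the matching carried by the used edges, and then to identify the reachable transmitting nodes with the unique set ${\cal{V}}_{xspan}^{x_j}$ by exploiting that $T({\cal{V}}_{xu,x_j}^{i},{\cal{V}}_{yu,x_j}^{i})$ has full rank. First I would note that $\mathrm{rank}(T({\cal{E}}_{u,x_j}^{i}))=|{\cal{E}}_{u,x_j}^{i}|\le k$ together with $|{\cal{V}}_{xu,x_j}^{i}|=|{\cal{V}}_{yu,x_j}^{i}|=k$ (each such node being incident to at least one used edge, and each used edge having one transmitting endpoint) forces $|{\cal{E}}_{u,x_j}^{i}|=k$ with every node incident to exactly one used edge. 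Hence the used edges form a perfect matching $M:{\cal{V}}_{xu,x_j}^{i}\to{\cal{V}}_{yu,x_j}^{i}$. In ${\cal{G}}_{sub}^{x_j}$ each non-used edge keeps its orientation $x\to y$, each matched edge is reversed to $M(x)\to x$, and $x_j$ carries no used edge; so the only out-edge of a receiving node $y$ is $y\to M^{-1}(y)$ and the only way to enter a node $x\neq x_j$ is from $M(x)$. Every walk out of $x_j$ therefore alternates non-matched (forward) and matched (backward) edges.

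Let $R$ be the set of nodes reachable from $x_j$, and set $R_x=R\cap{\cal{V}}_{xu,x_j}^{i}$ and $R_y=R\cap{\cal{V}}_{yu,x_j}^{i}$. Since entering $x\in R_x$ requires first reaching $M(x)$, while from any $y\in R_y$ one steps back to $M^{-1}(y)$, the alternating structure yields $R_y=M(R_x)$ and in particular $|R_x|=|R_y|$. The key claim I would establish is that the rows indexed by $\{x_j\}\cup R_x$ vanish on every column of $\bar{R}_y:={\cal{V}}_{yu,x_j}^{i}\setminus R_y$: if $T(x,y)=1$ for such a row $x$ and some $y\in\bar{R}_y$, then $(x,y)$ cannot be the matched edge of $x$ (its match $M(x)$ lies in $R_y$, whereas $y\in\bar{R}_y$), so $x\to y$ is a forward edge of ${\cal{G}}_{sub}^{x_j}$ and makes $y$ reachable, contradicting $y\in\bar{R}_y$.

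I would then close the argument over $GF(2)$. The $k$ rows of the full-rank matrix $T({\cal{V}}_{xu,x_j}^{i},{\cal{V}}_{yu,x_j}^{i})$ restricted to $R_x$ are linearly independent and, by the previous step, supported on the $|R_y|=|R_x|$ columns $R_y$, so the square submatrix $T(R_x,R_y)$ is invertible and $T(x_j,R_y)$ is a combination of the $R_x$-rows restricted to $R_y$. Because both sides also vanish on $\bar{R}_y$, the same combination represents the whole row $T(x_j,{\cal{V}}_{yu,x_j}^{i})$ using only rows in $R_x$; by the uniqueness of such a representation guaranteed by full rank (Lemma \ref{lemma1}), its support is exactly ${\cal{V}}_{xspan}^{x_j}$, whence ${\cal{V}}_{xspan}^{x_j}\subseteq R_x$ and each node of ${\cal{V}}_{xspan}^{x_j}$ is reachable from $x_j$. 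For the complexity bound, ${\cal{G}}_{sub}^{x_j}$ has $2k+1$ vertices and $O(k^2)$ edges, so a single breadth- or depth-first search from $x_j$ produces one search tree that simultaneously yields a directed path to every reachable node, in particular to all $|{\cal{V}}_{xspan}^{x_j}|$ targets, in time $O(k^2)$.

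The step I expect to be the main obstacle is the bridge between the combinatorial and the algebraic pictures: carefully arguing that an unreachable column forces a zero entry in every row of $\{x_j\}\cup R_x$ (which requires the matched-edge bookkeeping above so that a $1$-entry always produces a genuine forward edge), and then leveraging full rank to obtain both the invertibility of $T(R_x,R_y)$ and the uniqueness of the spanning coefficients. The matching structure and the graph search are routine once this correspondence between reachability and the linear span is secured.
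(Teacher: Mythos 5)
Your proof is correct, but it takes a genuinely different route from the paper's. Both arguments share the same setup: the used edges ${\cal{E}}_{u,x_j}^i$ form a perfect matching $M$ between ${\cal{V}}_{xu,x_j}^{i}$ and ${\cal{V}}_{yu,x_j}^{i}$, and directed paths from $x_j$ in ${\cal{G}}_{sub}^{x_j}$ are exactly alternating paths relative to $M$. From there the paper works one target at a time: for each $x\in{\cal{V}}_{xspan}^{x_j}$ it attaches an auxiliary receiving node $y'$ with the single edge $(x,y')$, shows $\mathrm{rank}(T({\cal{V}}_{xu,x_j}^{i}+x_j,{\cal{V}}_{yu,x_j}^{i}+y'))=k+1$, converts the two rank statements into the existence of matchings of sizes $k$ and $k+1$, and invokes the Berge-style augmenting-path argument to obtain an alternating path from $x_j$ to $y'$, whose last edge is forced to be $(x,y')$. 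You instead analyze the reachable set $R$ all at once: reachability closure forces the zero block $T(\{x_j\}\cup R_x,\bar{R}_y)=0$, the resulting invertibility of the square submatrix $T(R_x,R_y)$ lets you express $T(x_j,\cdot)$ as a combination of rows in $R_x$ valid on all of ${\cal{V}}_{yu,x_j}^{i}$, and the uniqueness of the spanning set from Lemma \ref{lemma1} identifies the support of that combination with ${\cal{V}}_{xspan}^{x_j}$, giving ${\cal{V}}_{xspan}^{x_j}\subseteq R_x$. Your route is self-contained (no appeal to the rank-implies-matching fact or to Berge's theorem), treats all targets simultaneously rather than via per-target auxiliary constructions, and yields an algebraic characterization of the reachable transmitting nodes as a by-product; the paper's route stays entirely within standard bipartite-matching machinery, so its key step is a citation to known theory rather than a new linear-algebra argument. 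One cosmetic point: you take $|{\cal{V}}_{xu,x_j}^{i}|=k$ whereas the paper's accounting in iteration $k$ has $|{\cal{V}}_{xu}^{i}|=k-1$; this off-by-one is immaterial to both the reachability argument and the $O(k^2)$ bound.
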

\begin{proof}
From Lemma \ref{lemma1}, for $\forall x\in{\cal{V}}_{xspan}^{x_j}$, rank$(T({\cal{V}}_{xu,x_j}^{i},{\cal{V}}_{yu,x_j}^{i}))=$ rank$(T({\cal{V}}_{xu,x_j}^{i}+x_j-x,{\cal{V}}_{yu,x_j}^{i}))=k$ where $k=|{\cal{P}}|$. Introduce an auxiliary receiving node $y'$ and an edge $(x,y')$. It's easy to see that rank$(T({\cal{V}}_{xu,x_j}^{i}+x_j,{\cal{V}}_{yu,x_j}^{i}+y'))=k+1$. Given rank$(T({\cal{V}}_{xu,x_j}^{i},{\cal{V}}_{yu,x_j}^{i}))=k$ and rank$(T({\cal{V}}_{xu,x_j}^{i}+x_j,{\cal{V}}_{yu,x_j}^{i}+y'))=k+1$, there is a size $k$ prefect matching between ${\cal{V}}_{xu,x_j}^{i}$ and ${\cal{V}}_{yu,x_j}^{i}$, $M_1={\cal{E}}_{u,x_j}^i$ being such a matching, and a size $k+1$ perfect matching between ${\cal{V}}_{xu,x_j}^{i}+x_j$ and ${\cal{V}}_{yu,x_j}^{i}+y'$. Using a similar argument as in finding the maximum bipartite matching, we claim that there is an alternating path, relative to the matching $M_1$, starting from an unused transmitting node $x_j$ to an unused receiving node $y'$, alternating between edges not in the current matching $M_1$ and edges in the current matching $M_1$, i.e., there is a path $P_{x_j\rightarrow y'}=\{(x_j,y_1),(y_1,x_1),(x_1,y_2),(y_2,x_2),...(x_{m-1},y_m),(y_m,x_m),(x_m,y')=(x,y')\}$ with $(x_i,y_i),1\leq i\leq m$ being edges in $M_1={\cal{E}}_{u,x_j}^i$. So we proved that there is a path $P_{x_j\rightarrow x}=\{(x_j,y_1),(y_1,x_1),(x_1,y_2),(y_2,x_2),...(x_{m-1},y_m),(y_m,x_m)=(y_m,x)\}$ with $(x_i,y_i),1\leq i\leq m$ being edges in ${\cal{E}}_{u,x_j}^i$. Without loss of generality, we also use the following representation of the path $P_{x_j\rightarrow x}=\{(x_j,y_1),(x_1,y_1),(x_1,y_2),(x_2,y_2),...(x_{m-1},y_m),(x_m,y_m)=(x,y_m)\}$ with $(x_i,y_i),1\leq i\leq m$ being edges in ${\cal{E}}_{u,x_j}^i$.

In iteration $k$ of our algorithm, $|{\cal{V}}_{xu}^i|=|{\cal{V}}_{yu}^i|=k-1$, so the number of nodes in ${\cal{G}}_{sub}^{x_j}$ is bounded by $O(k)$, which also means the number of edges in ${\cal{G}}_{sub}^{x_j}$ is bounded by $O(k^2)$. To find directed paths from $x_j$ to all $x\in{\cal{V}}_{xspan}^{x_j}$ in ${\cal{G}}_{sub}^{x_j}$ takes time bounded by $O(k^2)$ by using some well-known graph traversal algorithms, like breadth-first search.
\end{proof}

Let's briefly recall the depth-first search (DFS) algorithm first. DFS algorithm is a well-known algorithm for traversing graphs. To find a path, DFS progresses by exploring the outgoing edges of a node before exploring any other outgoing edges of the node's predecessor. In this way, it proceeds deeper and deeper until it reaches the goal node when it stops or until it reaches a node without any outgoing edges when it backtracks to the most recent node that it hasn't finished exploring. In the search process, each node in the graph would be explored at most once.

Our modified depth-first search (MDFS) algorithm for linear deterministic relay networks inherits the basic forwarding and backtracking operations from the basic DFS algorithm and each super node in ${\cal{N}}$ is treated like a node in DFS. The difference is that in order to find an $S-D$ path in the graph ${\cal{G}}=({\cal{N,V,E}})$ linearly independent to the paths in ${\cal{P}}$, we have to avoid linear dependency between different paths in our algorithm, which means that instead of allowing a valid forwarding move along each outgoing edge when a node is explored as in DFS, more constraints should be imposed on the forwarding moves of a super node in MDFS. We propose the following MDFS algorithm to accomplish the path augmentation task in our problem.

\subsection{Modified Depth-First Search (MDFS) Algorithm}\label{mdfs}

In the following, the exploration to a super node $N\in{\cal{N}}$ or a node $v\in{\cal{V}}$ by MDFS refers to that MDFS has extended the path found in the current iteration to $N$ or ${\cal{N}}(v)$ and now it continues to extend the path further from $N$ or from $v$. The exploration to a super node $N$ is realized as the exploration to some of its unexplored incident nodes $v$ (which we call admissible nodes of $N$ in Definition \ref{definition1}). Once a super node or a node is explored, it's labeled explored. During the running time of MDFS, each super node $N\in{\cal{N}}$ could be labeled as unexplored or explored indicating that it hasn't or has been explored by MDFS. Each node in ${\cal{V}}$ could be labeled as unexplored or explored indicating that it doesn't allow or allows exploration by MDFS. There is also a type labeling with each node in ${\cal{V}}_x$ indicating how it can be explored by MDFS if it allows exploration. Let SetLabel$(X,$ LABEL) be the function setting label of $X$ ($X$ can be a super node or a node) as LABEL (LABEL can be explored or unexplored). Let LABEL$=$GetLabel$(X)$ be the function returning the label of $X$. Let SetType$(x,B)$ be the function setting type of transmitting node $x$ as $B$ ($B$ can be $1,2,$ or $3$). Let $B=$GetType$(x)$ be the function returning the type of $x$.

\begin{definition}\label{definition1}
\textbf{Admissible nodes and admissible forwarding moves of super nodes --} We define the admissible nodes for a super node $N$ when $N$ is explored by MDFS as follows: ${\cal{V}}_{N}^{ad}=\{x: x\in{\cal{V}}_{x,N}, x\not\in {\cal{V}}_{xu}^{{\cal{L}}(N)}, x$ labeled unexplored or $y: y\in{\cal{V}}_{y,N}, y\in{\cal{V}}_{y{\cal{P}}}, y$ labeled unexplored$\}$. The exploration to a super node $N$ is realized as the exploration to its admissible nodes. The following forwarding moves starting from ${\cal{V}}_{N}^{ad}$ are defined as admissible forwarding moves allowed in MDFS algorithm on the graph ${\cal{G}}=({\cal{N,V,E}})$ when it explores a super node $N$.
\begin{enumerate}
\item
\emph{Type $1$}: moving forward along $(x,y)\in{\cal{E}}$ with $x\in{\cal{V}}_{N}^{ad}$, GetType$(x)=1,2,$ or $3$ and $y\not\in{\cal{V}}_{yu}^{{\cal{L}}(N)}$, rank$(T({\cal{V}}_{xu,x}^{{\cal{L}}(N)}+x,{\cal{V}}_{yu,x}^{{\cal{L}}(N)}+y))=|{\cal{P}}|+1$ and GetLabel$({\cal{N}}(y))=$unexplored.
\item
\emph{Type $2$}: moving forward along the path $P_{x\rightarrow x'}$ (as proved to exist in Lemma \ref{lemma5}) for any $x'\in{\cal{V}}_{xspan}^{x}$ with $x\in{\cal{V}}_{N}^{ad}$, GetType$(x)=1$ or $3$.
\item
\emph{Type $3$}: moving forward along $(x,y)\in{\cal{E}}$ with $y\in{\cal{V}}_{N}^{ad}$, $(x,y)\in{\cal{E}}_{u,x}^{{\cal{L}}(N)-1}$.
\end{enumerate}
When $N$ is explored, $x\in{\cal{V}}_{N}^{ad}$ with GetType$(x)=1$ or $3$ would allow to start type $1$ and type $2$ admissible forwarding moves, $x\in{\cal{V}}_{N}^{ad}$ with GetType$(x)=2$ would allow to start type $1$ admissible forwarding moves only and $y\in{\cal{V}}_{N}^{ad}$ would allow to start type $3$ admissible forward moves. There is an ordering in exploring ${\cal{V}}_{N}^{ad}$: type $2$ and type $3$ nodes should be explored first, then type $1$ nodes and finally the receiving nodes.
\end{definition}

\begin{definition}\label{definition2}
\textbf{Modified depth-first search algorithm (MDFS):} The MDFS algorithm is defined in terms of initialization, exploring of a super node, labeling of ${\cal{N,V}}$ and updating of ${\cal{P}}'$ and ${\cal{E}}_u$
as follows:
\begin{enumerate}
\item
\emph{Initialization: }Set ${\cal{N}}$ and ${\cal{V}}$ as unexplored and ${\cal{V}}_x$ as type $1$ nodes.
\item
\emph{Exploring of a super node $N$: }When MDFS explores a super node $N$, it means that a partial path $P_{|{\cal{P}}|+1}$ from $S$ to $N$ has been found in addition to the $|{\cal{P}}|$ complete $S-D$ paths. As mentioned in Definition \ref{definition1}, the exploration to a super node $N$ is realized as exploration to its admissible nodes ${\cal{V}}_{N}^{ad}$ and three kinds of admissible forwarding moves from ${\cal{V}}_{N}^{ad}$ are allowed. There is an ordering in exploring ${\cal{V}}_{N}^{ad}$: type $2$ and type $3$ nodes should be explored first, then type $1$ nodes and finally the receiving nodes. Once a super node or a node is explored by MDFS, it is labeled explored. Now we explain how to understand these three admissible forwarding moves and how to label ${\cal{N,V}}$ and update ${\cal{P}}'$, ${\cal{E}}_u$ with these moves.
\begin{enumerate}
\item
Type $1$ admissible forwarding move: A type $1$ admissible forwarding move can be understood as that MDFS extends $P_{|{\cal{P}}|+1}$ along an edge $(x,y)$ from super node $N$ to super node ${\cal{N}}(y)$, i.e., $P_{|{\cal{P}}|+1}=P_{|{\cal{P}}|+1}+(x,y)$. ${\cal{P}}'$ is updated accordingly. Then MDFS makes ${\cal{N}}(y)$ with ${\cal{L(N}}(y))={\cal{L}}(N)+1$ the next super node to be explored when ${\cal{E}}_u^{{\cal{L}}(N)}$ is updated as ${\cal{E}}_u^{{\cal{L}}(N)}+(x,y)$ and $|{\cal{E}}_u^{{\cal{L}}(N)}|$ increases from $|{\cal{P}}|$ to $|{\cal{P}}|+1$.
\item
Type $2$ admissible forwarding move: A type $2$ admissible forwarding move can be understood as that MDFS updates ${\cal{P}}'$ according to the path $P_{x\rightarrow x'}$ (proved to exist in Lemma \ref{lemma5}) for some $x'\in{\cal{V}}_{xspan}^{x}$ as follows. Let $P_{x\rightarrow x'}=\{(x,y_1)$, $(x_1,y_1)$, $(x_1,y_2)$, $...(x_m,y_m)=(x',y_m)\}$ with $(x_i,y_i),1\leq i\leq m$ being path edges used by $m$ $S-D$ paths in ${\cal{P}}'$
existing before the current move, denoted as $P_i,1\leq i\leq m$. After the current type $2$ forwarding move, $P_i,1\leq i\leq m$ and $P_{|{\cal{P}}|+1}$ are updated to $P_i'$ and $P_{|{\cal{P}}|+1}'$ as follows. Let $P_i(N_1,N_2)$ denote the segment of $P_i$ from super node $N_1$ to $N_2$. $P_1'=P_{|{\cal{P}}|+1}(S,{\cal{N}}(x))+(x,y_1)+P_1({\cal{N}}(y_1),D)$, $P_i'=P_{i-1}(S,{\cal{N}}(x_{i-1}))+(x_{i-1},y_i)+P_i({\cal{N}}(y_i),D),1<i\leq m$ and $P_{|{\cal{P}}|+1}'=P_m(S,{\cal{N}}(x'))$. ${\cal{P}}'$ is updated accordingly. After the type $2$ admissible forwarding move, $x'$ is labeled as unexplored and set as type $2$ node. Then MDFS makes ${\cal{N}}(x')$ with ${\cal{L(N}}(x'))={\cal{L}}(N)$ the next super node to be explored when ${\cal{E}}_u^{{\cal{L}}(N)}$ is updated as ${\cal{E}}_u^{{\cal{L}}(N)}+(x,y_1)-(x_1,y_1)+(x_1,y_2)-(x_2,y_2)+...+(x_{m-1},y_m)-(x',y_m)$ and $|{\cal{E}}_u^{{\cal{L}}(N)}|$ keeps to be $|{\cal{P}}|$.
\item
Type $3$ admissible forwarding move: A type $3$ admissible forwarding move can be understood as that MDFS updates ${\cal{P}}'$ along the edge $(x,y)$ as follows. Let $P_x$ be the path in ${\cal{P}}'$ existing before the current move where $(x,y)$ belongs. If $P_x\neq P_{|{\cal{P}}|+1}$, then after the current move, MDFS updates $P_x$ and $P_{|{\cal{P}}|+1}$ to $P_x'$ and $P_{|{\cal{P}}|+1}'$ as follows. $P_x'=P_{|{\cal{P}}|+1}(S,N)+P_x(N,D)$ and $P_{|{\cal{P}}|+1}'=P_x(S,{\cal{N}}(x))$. If $P_x=P_{|{\cal{P}}|+1}$, then after the current move, MDFS updates $P_{|{\cal{P}}|+1}$ to $P_{|{\cal{P}}|+1}'$ as follows. $P_{|{\cal{P}}|+1}'=P_{|{\cal{P}}|+1}(S,{\cal{N}}(x))$. ${\cal{P}}'$ is updated accordingly. After the type $3$ move, $y$ is labeled explored, $x$ is labeled as unexplored and set as type $3$ node. Then MDFS makes ${\cal{N}}(x)$ with ${\cal{L(N}}(x))={\cal{L}}(N)-1$ the next super node to be explored when ${\cal{E}}_u^{{\cal{L}}(N)-1}$ is updated as ${\cal{E}}_u^{{\cal{L}}(N)-1}-(x,y)$ and $|{\cal{E}}_u^{{\cal{L}}(N)-1}|$ decreases from $|{\cal{P}}|+1$ to $|{\cal{P}}|$.
\end{enumerate}
\end{enumerate}
\end{definition}

\begin{lemma}\label{lemma7}
In iteration $k+1$ of our algorithm, MDFS defined in Definition \ref{definition2} maintains a set of size $k$ linearly independent $S-D$ paths while it tries to complete a $(k+1)$th $S-D$ path $P_{k+1}$. And when MDFS explores a super node $N$, it means MDFS extends $P_{k+1}$ from $S$ to $N$ in addition to $k$ complete $S-D$ paths (all stored in ${\cal{P}}'$) and there are $k+1$ linearly independent used path edges in each of the first ${\cal{L}}(N)-1$ layer cuts and $k$ linearly independent used path edges in each of the rest layer cuts. These two facts lead to the conclusion that when MDFS returns True in iteration $k+1$, totally $k+1$ linearly independent $S-D$ paths are found (and stored in ${\cal{P}}'$).
\end{lemma}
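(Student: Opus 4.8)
The plan is to recast the two asserted facts as a single loop invariant of the MDFS procedure and prove it by induction on the number of admissible forwarding moves executed, writing $|\mathcal{P}|=k$ as holds in iteration $k+1$. The invariant I would maintain is: \emph{at every moment when MDFS is about to explore a super node $N$, the structure $\mathcal{P}'$ stores $k$ complete $S-D$ paths together with a partial path $P_{k+1}$ running from $S$ to $N$, and $\mathrm{rank}(T(\mathcal{E}_u^i))=|\mathcal{E}_u^i|$ holds for every layer cut $i$, with $|\mathcal{E}_u^i|=k+1$ for $i\le \mathcal{L}(N)-1$ and $|\mathcal{E}_u^i|=k$ for $i\ge \mathcal{L}(N)$.} Linear independence of the $k$ complete paths then follows as a corollary (in each cut their edges are a size-$k$ independent subset of $\mathcal{E}_u^i$). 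Since exploration of a super node proceeds entirely through the three move types of Definition \ref{definition2}, it suffices to show each single move preserves the invariant. For the base case, MDFS initializes $\mathcal{P}'$ from the $k$ linearly independent paths of $\mathcal{P}$ and starts exploring $S$ in layer $1$ with $P_{k+1}$ sitting at $S$; then $\mathcal{L}(N)-1=0$ makes the first clause vacuous, and every cut carries the $k$ independent edges inherited from $\mathcal{P}$, so the invariant holds initially.

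For the inductive step I would treat the three moves in turn. A \emph{type $1$} move appends one edge $(x,y)$ to $P_{k+1}$, advancing from $N$ to $\mathcal{N}(y)$ in layer $\mathcal{L}(N)+1$ and setting $\mathcal{E}_u^{\mathcal{L}(N)}\leftarrow\mathcal{E}_u^{\mathcal{L}(N)}+(x,y)$; the rank condition $\mathrm{rank}(T(\mathcal{V}_{xu,x}^{\mathcal{L}(N)}+x,\mathcal{V}_{yu,x}^{\mathcal{L}(N)}+y))=k+1$ built into the move is exactly what makes the new cut carry $k+1$ independent edges, while no other cut changes, so the invariant carries over to $\mathcal{N}(y)$. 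A \emph{type $3$} move deletes $(x,y)$ from $\mathcal{E}_u^{\mathcal{L}(N)-1}$ and swaps the prefixes of $P_x$ and $P_{k+1}$ at $N$, retreating to $\mathcal{N}(x)$ in layer $\mathcal{L}(N)-1$; here I would check that every cut other than $\mathcal{L}(N)-1$ keeps exactly the same used-edge set (merely reassigned among path indices), whereas cut $\mathcal{L}(N)-1$ loses one edge, dropping from $k+1$ to $k$ edges that stay independent since a subset of an independent set is independent.

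The hard part will be the \emph{type $2$} move, since it simultaneously reroutes the $m$ paths $P_1,\dots,P_m$ meeting the alternating path $P_{x\to x'}$ of Lemma \ref{lemma5}. My plan is to verify three things. First, from the explicit concatenations $P_1'=P_{k+1}(S,\mathcal{N}(x))+(x,y_1)+P_1(\mathcal{N}(y_1),D)$, $P_i'=P_{i-1}(S,\mathcal{N}(x_{i-1}))+(x_{i-1},y_i)+P_i(\mathcal{N}(y_i),D)$ and $P_{k+1}'=P_m(S,\mathcal{N}(x'))$, I would confirm the junction super nodes match (each $P_i$ indeed passes through $\mathcal{N}(y_i)$ because $(x_i,y_i)$ is one of its edges), so every $P_i'$ is again a genuine path and $P_{k+1}'$ a genuine partial path ending at $\mathcal{N}(x')$ in layer $\mathcal{L}(N)$. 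Second, because $P_{x\to x'}$ lives entirely inside layer cut $\mathcal{L}(N)$, every cut $i\neq\mathcal{L}(N)$ sees exactly the same set of used edges before and after (prefixes and suffixes are only permuted among path indices), leaving its rank and cardinality untouched. Third, in cut $\mathcal{L}(N)$ the receiving set $\mathcal{V}_{yu}^{\mathcal{L}(N)}$ is unchanged while the transmitting set is modified only by removing $x'$ and inserting $x$; Lemma \ref{lemma1}, applied with the explored node $x$ in the role of $x_j$ and the spanning node $x'$ in the role of $x$, gives $\mathrm{rank}(T(\mathcal{V}_{xu,x}^{\mathcal{L}(N)}+x-x',\mathcal{V}_{yu,x}^{\mathcal{L}(N)}))=k$, which is precisely the assertion that this swap preserves cardinality $k$ and full rank. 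As a type $2$ move advances to $\mathcal{N}(x')$ within the same layer $\mathcal{L}(N)$, the cut-by-cut tally is unchanged and the invariant persists.

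Finally, to obtain the stated conclusion I would invoke the invariant at termination: MDFS returns True exactly when $P_{k+1}$ reaches $N=D$ in layer $L$. Then $\mathcal{L}(N)-1=L-1$ equals the total number of layer cuts, so each of the $L-1$ cuts now carries $k+1$ linearly independent edges. By the definition of linearly independent $S-D$ paths this says exactly that the $k+1$ paths in $\mathcal{P}'$ (the $k$ complete ones together with the now-completed $P_{k+1}$) are linearly independent, which is the claim.
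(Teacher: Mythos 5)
Your overall strategy --- a single invariant proved by induction on the forwarding moves, with the three move types checked one by one --- is exactly the strategy of the paper's own proof, and your treatment of type $1$ and type $2$ moves matches the paper's (your type $2$ analysis, invoking Lemma \ref{lemma1} for the swap ${\cal{V}}_{xu}+x-x'$, is in fact more careful than the paper's one-line appeal to the same lemma). However, there is a genuine gap, and it sits at the two places where you invoke the principle that \emph{a subset of a linearly independent set of edges is linearly independent}: once to recover independence of the $k$ complete paths from the rank invariant (``their edges are a size-$k$ independent subset of ${\cal{E}}_u^i$''), and once in the type $3$ case (``dropping from $k+1$ to $k$ edges that stay independent since a subset of an independent set is independent''). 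For the notion of independence used here --- the adjacency matrix of an edge set, indexed by \emph{its own} transmitting and receiving nodes, has full rank --- this hereditary property is false. Take transmitting nodes $x_1,x_2,x_3$ and receiving nodes $y_1,y_2,y_3$, each alone in its own super node with one-bit channels (so this is a legitimate layer of a linear deterministic network), with layer adjacency
\begin{equation*}
T=\begin{pmatrix}1&1&1\\ 1&1&0\\ 1&0&1\end{pmatrix},
\end{equation*}
which has rank $3$ over $GF(2)$: the three edges $(x_1,y_1),(x_2,y_2),(x_3,y_3)$ are linearly independent, yet the subset $\{(x_1,y_1),(x_2,y_2)\}$ has adjacency matrix $\left(\begin{smallmatrix}1&1\\ 1&1\end{smallmatrix}\right)$ of rank $1$ and is dependent.

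Because of this, the type $3$ step does not go through as you wrote it: after deleting one edge from the $k+1$ independent edges of ${\cal{E}}_u^{{\cal{L}}(N)-1}$, the remaining $k$ edges --- which are precisely the cut edges of the re-spliced complete paths --- need not be independent, and nothing in your invariant excludes configurations like the one above. Note also that your invariant is strictly weaker than the paper's induction hypothesis, which carries the independence of the $k$ complete paths as a separate clause rather than deriving it from the rank of ${\cal{E}}_u^i$; with the hereditary principle gone, your ``corollary'' recovering that clause collapses as well. To be fair, the paper's own proof disposes of exactly this step with the word ``Clearly'' and supplies no justification either; but where the paper leaves the step unjustified, your write-up justifies it with a false statement, so the argument as written fails. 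Closing the gap genuinely requires more than local rank bookkeeping: one must exploit how ${\cal{E}}_u^i$ was built up by the admissible moves (the kind of history-dependent analysis the paper attempts only later, in Lemmas \ref{lemm2}--\ref{lemm5}), not merely the fact that the current edge set has full rank.
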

\begin{proof}
We prove by induction. Assume that in the first $k$ iterations of our algorithm, MDFS already finds $k$ linearly independent $S-D$ paths. Then at the beginning of iteration $k+1$ before we call MDFS, these two statements are clearly true. Now it's sufficient for us to show that these three kinds of forwarding moves allowed in MDFS keep these two statements true after each forwarding move of MDFS. First, clearly a type $1$ forwarding move along an edge $(x,y)$ increases $|{\cal{E}}_u^{{\cal{L}}(x)}|$ from $k$ to $k+1$ with rank$(T({\cal{E}}_u^{{\cal{L}}(x)}))=k+1$ and it has no impact to the set of $k$ $S-D$ paths existing before the current move. So these two statements remain true after a type $1$ forwarding move. Second, a type $2$ forwarding move keeps $|{\cal{E}}_u^{{\cal{L}}(x)}|$ unchanged with rank$(T({\cal{E}}_u^{{\cal{L}}(x)}))=k$ (according to Lemma \ref{lemma1}) while it updates the paths $P_i,1\leq i\leq m$ and $P_{k+1}$ to $P_i',1\leq i\leq m$ and $P_{k+1}'$. Clearly the rest $k-m$ $S-D$ paths existing before the current move are not impacted by the move. So these two statements remain true after a type $2$ forwarding move. Third, a type $3$ forwarding move extends $P_{k+1}$ along an edge $(x,y)$ from super node ${\cal{N}}(y)$ to ${\cal{N}}(x)$. Because MDFS reaches ${\cal{N}}(y)$ before the current move, $|{\cal{E}}_u^{{\cal{L(N}}(x)}|=k+1$ and after the current move, $|{\cal{E}}_u^{{\cal{L(N}}(x)}|=k$ with $(x,y)$ deleted. Clearly MDFS maintains $k$ linearly independent $S-D$ paths after the current move when it proceeds to explore ${\cal{N}}(x)$. So these two statements remain true after a type $3$ forwarding move.
\end{proof}

\begin{lemma}\label{lemm9}
Let ${\cal{P}}$ be the set of $S-D$ paths found by our algorithm and $|{\cal{P}}|=K$, then the paths in ${\cal{P}}$ correspond to some transmission scheme of rate $K$ from $S$ to $D$ for the underlying deterministic network.
\end{lemma}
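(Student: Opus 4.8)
The plan is to exhibit an explicit linear relaying scheme built from the $K$ paths and to show it delivers $K$ independent message bits to $D$ with zero error, i.e. rate $K$. The starting observation is structural: since the $K$ paths are linearly independent, at every layer cut $i$ the $K$ path-edges crossing it have a rank-$K$ adjacency matrix; a binary matrix of rank $K$ on a set of $K$ edges forces their $K$ transmitting endpoints to be distinct and their $K$ receiving endpoints to be distinct, so the restricted adjacency matrix is a genuine invertible $K\times K$ matrix $T_i$ over $GF(2)$. Indexing the used transmitting and used receiving nodes of layer cut $i$ by the path they belong to, the bit received at path $m$'s receiving node is $\sum_{m'} T_i(m',m)\,s_{m'}$, so the received vector equals $T_i^{\top}\mathbf{s}$ when the transmitted vector on the used nodes is $\mathbf{s}$ and all other transmitting nodes are silent.

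Next I would define the scheme. At the source I assign the $K$ message bits $\mathbf{b}\in GF(2)^K$ to the $K$ used transmitting nodes of layer cut $1$, one bit per path. At every relay super node I use a one-bit linear copy: for each path through the node, the bit placed on that path's outgoing used transmitting node equals the bit received on that path's incoming used receiving node. This map is well-defined because a path enters and leaves the super node exactly once, so its receiving and transmitting nodes lie in the same super node, and by distinctness of used nodes within a layer no transmitting node is assigned two conflicting bits. Every transmitting node not used by the paths is set to $0$; this silencing is exactly what removes all interference from outside the path system, so the clean recursion $\mathbf{u}^{(i+1)} = T_i^{\top}\mathbf{u}^{(i)}$ holds, where $\mathbf{u}^{(i)}$ collects the bits carried by the $K$ paths at layer cut $i$ and $\mathbf{u}^{(1)}=\mathbf{b}$.

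Composing these recursions across the $L-1$ layer cuts, the vector received by $D$ at its $K$ used receiving nodes is $\bigl(T_{L-1}^{\top}T_{L-2}^{\top}\cdots T_1^{\top}\bigr)\mathbf{b}$. Each $T_i$ is invertible over $GF(2)$, hence so is the product $G = T_{L-1}^{\top}\cdots T_1^{\top}$; since $D$ knows the topology it can compute $G^{-1}$ and recover $\mathbf{b} = G^{-1}\mathbf{r}_D$ exactly from the received vector $\mathbf{r}_D$. Thus $K$ message bits are delivered error-free per channel use using only one-bit linear encoding at the relay super nodes, which is precisely the asserted rate-$K$ transmission scheme.

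The step I expect to demand the most care is the interference bookkeeping at the receiving nodes: I must verify that after silencing all unused transmitting nodes the only contributions to a used receiving node come from used transmitting nodes of the same layer cut, so that the per-layer input--output map is exactly $T_i^{\top}$ and nothing leaks in. This amounts to checking that a used receiving node may legitimately be hit by the used transmitting node of a different path (interference folded into the off-diagonal entries of $T_i$) while no unused node contributes, and that the relay copy map is consistent, which reduces to the distinctness of used nodes guaranteed by full rank. Once this is pinned down, invertibility of each $T_i$ and of their product makes decodability at $D$ immediate.
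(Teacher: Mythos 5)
Your proposal is correct and follows essentially the same route as the paper: both construct the scheme by assigning one message bit per path at $S$, using one-bit relay maps at the super nodes (your path-following copy is a particular instance of the paper's ``any one-one mapping'' between used receiving and transmitting nodes), silencing unused nodes, and concluding that the end-to-end transfer matrix is a product of full-rank $K\times K$ matrices over $GF(2)$, hence invertible and decodable at $D$. Your explicit treatment of why full rank forces distinct used endpoints and why silencing eliminates outside interference is a slightly more careful rendering of steps the paper leaves implicit, but it is not a different argument.
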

\begin{proof}
We prove by constructing a transmission scheme of rate $K$ from $S$ to $D$ for the underlying deterministic network by using the $K$ paths in ${\cal{P}}$. Lemma \ref{lemma7} says that these $K$ paths in ${\cal{P}}$ found by our algorithm are linearly independent. Let ${\cal{V}}_{y,N}^{{\cal{P}}}\subseteq {\cal{V}}_{y,N}$ (${\cal{V}}_{x,N}^{{\cal{P}}}\subseteq {\cal{V}}_{x,N}$) be the set of transmitting (receiving) nodes incident to $N$ that are used by some paths in ${\cal{P}}$. Clearly $|{\cal{V}}_{y,N}^{{\cal{P}}}|=|{\cal{V}}_{x,N}^{{\cal{P}}}|$. The relay function at each relay super node $N$ could be any one-one mapping between ${\cal{V}}_{y,N}^{{\cal{P}}}$ and ${\cal{V}}_{x,N}^{{\cal{P}}}$, i.e., each received bit from ${\cal{V}}_{y,N}^{{\cal{P}}}$ is transmitted forward by a unique transmitting node in ${\cal{V}}_{x,N}^{{\cal{P}}}$ specified by the mapping. Clearly each such mapping corresponds to a full rank adjacency matrix. For simplicity, we can treat those relay functions as intra-layer paths which together with the paths in ${\cal{P}}$ (treated as inter-layer paths) completely specify a transmission scheme from $S$ to $D$. Since in each of the $L-1$ inter-layers and each of the $L-2$ intra-layers the adjacency matrix for used path edges has full rank $K$, the overall transfer matrix from $S$ to $D$ along these paths is the product of all these adjacency matrix and also has full rank $K$ which guarantees that transmission information rate $K$ is allowed between $S-D$ with the transmission scheme defined by these $K$ paths in ${\cal{P}}$ and any one-one mapping function for each relay super node.
\end{proof}

Let's justify the rule that type $2$ nodes shouldn't start any type $2$ forwarding moves in MDFS. By definition, a type $1$ or type $3$ transmitting node $x\in{\cal{V}}_{N}^{ad}$ can start type $1$ or type $2$ admissible forwarding moves of $N$ while a type $2$ transmitting node $x'\in{\cal{V}}_{N}^{ad}$ can only start type $1$ admissible forward moves of $N$. Lemma \ref{lemma1} helps in explaining why type $2$ transmitting node shouldn't start any type $2$ forwarding moves. When $x'$ is labeled as type $2$ node after a type $2$ forwarding move along the path $P_{x\rightarrow x'}$ for some type $1$ or type $3$ node $x$, it means that $x'\in{\cal{V}}_{xspan}^{x}$. It's easy to see that when $x'$ is explored right after the type $2$ forwarding move along path $P_{x\rightarrow x'}$, ${\cal{V}}_{xspan}^{x'}={\cal{V}}_{xspan}^{x}+x-x'$. Since $x$ is explored and $x$ is of type $1$ or type $3$, all transmitting nodes in ${\cal{V}}_{xspan}^{x}$ could be explored following type $2$ forwarding moves from $x$. It means that all transmitting nodes in ${\cal{V}}_{xspan}^{x'}$ are guaranteed to have the chance of being explored by MDFS as long as $x$ is explored and all type $2$ forwarding moves from $x$ are allowed. So type $2$ node $x'$ shouldn't start any type $2$ forwarding moves to avoid redundancy.

\subsection{Implementation Details of MDFS}\label{implement}

The MDFS algorithm is implemented in the (Res,${\cal{P}}')=$MDFS$({\cal{G,P,P'}},N,D,k)$ function. The input parameters are the graph ${\cal{G}}$, information about the $k-1$ found $S-D$ paths stored in structure ${\cal{P}}$, information about the updated $k-1$ $S-D$ paths and partial path $P_{k}$ stored in a structure ${\cal{P}}'$, starting super node $N$ from where to complete $P_{k}$, ending super node $D$ where to end $P_{k}$ and the iteration number $k$. In each iteration, we call MDFS by initializing $N$ to be $S$ for finding an $S-D$ path. The function returns (True$, {\cal{P}}')$ with $|{\cal{P}}'|=|{\cal{P}}|+1=k$ if the $k$th $S-D$ path is found in current iteration and (False$, {\cal{P}}')$ with ${\cal{P}}'={\cal{P}}$ if no path is found. Our algorithm calls MDFS$({\cal{G}},{\cal{P}},{\cal{P}}',S,D,k)$ in iteration $k$ for increasing $k$ until MDFS$({\cal{G}},{\cal{P}},{\cal{P}}',S,D,K+1)$ returns Res=False for some $K$, then our algorithm stops and claims that $K=C$. Each time MDFS(${\cal{G}},{\cal{P}},{\cal{P}}',S,D,k$) returns Res=True, ${\cal{P}}$ is updated as ${\cal{P}}'$ and used in next iteration.

\subsection{Accelerating The MDFS Algorithm}\label{accelerate}

From definition, we see that the main computational complexity of the MDFS algorithm comes from the rank computation of the associated binary adjacency matrix in deciding whether an edge $(x,y)$ qualifies for a type $1$ admissible forwarding move when $x$ of type $1$, type $2$ or type $3$ is explored and from the computation of ${\cal{V}}_{xspan}^{x}$ in deciding whether a type $2$ admissible forwarding move exists between a pair $x,x'$ when $x$ of type $1$ or type $3$ is explored by MDFS. In this subsection we explore some useful combinatorial properties related to the rank of the binary adjacency matrix and develop an equivalent but computationally simple method to replace the rank computation in MDFS.

In iteration $k+1$ of our algorithm, let $Y_{x_j}^{k+1}=\{y: {\cal{L}}(y)={\cal{L}}(x_j)+1=i+1 {\mbox{ and }} y\not\in{\cal{V}}_{yu,x_j}^{i}\}$, let $Y_{x_j,c}^{k+1}=\{y: y\in Y_{x_j}^{k+1}$ and $(x,y)\in{\cal{E}}$ for some $x\in\{x_j\cup{\cal{V}}_{xspan}^{x_j}\}\}$ and let $Y_{x_j,fr}^{k+1}=\{y: y\in Y_{x_j}^{k+1}$ and rank$(T({\cal{V}}_{xu,x_j}^{i}+x_j,{\cal{V}}_{yu,x_j}^{i}+y))=k+1\}$.

\begin{lemma}\label{lemma2}
$Y_{x_j,fr}^{k+1}$ is a subset of $Y_{x_j,c}^{k+1}$. The rank check of the adjacency matrix (rank$(T({\cal{V}}_{xu,x_j}^{i}+x_j,{\cal{V}}_{yu,x_j}^{i}+y))=k$ or $k+1$) for $\forall y\in Y_{x_j,c}^{k+1}$ is equivalent to checking $T(x_j,y)=\sum_{x\in{\cal{V}}_{xspan}^{x_j}}T(x,y)$ or not, whose computational complexity is bounded by $O(k)$.
\end{lemma}
\begin{proof}
First we prove that for $\forall y\in Y_{x_j}^{k+1}$, if there is no $(x,y)\in{\cal{E}}$ with $x\in\{x_j\cup{\cal{V}}_{xspan}^{x_j}\}$, then $T({\cal{V}}_{xu,x_j}^{i}+x_j,{\cal{V}}_{yu,x_j}^{i}+y)=k$. If there is no $(x,y)\in{\cal{E}}$ with $x\in\{x_j\cup{\cal{V}}_{xspan}^{x_j}\}$, then the vector $T({\cal{V}}_{xu,x_j}^{i}+x_j,y)$ has zero entries on rows indexed by $x_j\cup{\cal{V}}_{xspan}^{x_j}$, together with equation (\ref{eqn21}) we have $T(x_j,{\cal{V}}_{yu,x_j}^{i}+y)=\sum_{x\in{\cal{V}}_{xspan}^{x_j}}T(x,{\cal{V}}_{yu,x_j}^{i}+y)$. So rank$(T({\cal{V}}_{xu,x_j}^{i}+x_j,{\cal{V}}_{yu,x_j}^{i}+y))=$ rank$(T({\cal{V}}_{xu,x_j}^{i},{\cal{V}}_{yu,x_j}^{i}+y))=k$. So we proved $Y_{x_j,fr}^{k+1}$ is a subset of $Y_{x_j,c}^{k+1}$.

We know rank$(T({\cal{V}}_{xu,x_j}^{i},{\cal{V}}_{yu,x_j}^{i}+y))=k$ and $|{\cal{V}}_{xu,x_j}^{i}|=k$. If rank$(T({\cal{V}}_{xu,x_j}^{i}+x_j,{\cal{V}}_{yu,x_j}^{i}+y))=k$, then it must be $T(x_j,{\cal{V}}_{yu,x_j}^{i}+y)=\sum_{x\in {\cal{V}}_{xspan}^{x_j'}}T(x,{\cal{V}}_{yu,x_j}^{i}+y)$ for some ${\cal{V}}_{xspan}^{x_j'}\subseteq{\cal{V}}_{xu,x_j}^{i}$. Given equation (\ref{eqn21}) and ${\cal{V}}_{xspan}^{x_j}$ is unique, we have ${\cal{V}}_{xspan}^{x_j'}={\cal{V}}_{xspan}^{x_j}$ and $T(x_j,{\cal{V}}_{yu,x_j}^{i}+y) = \sum_{x\in{\cal{V}}_{xspan}^{x_j'}}T(x,{\cal{V}}_{yu,x_j}^{i}+y)$ is equivalent to $T(x_j,y)=\sum_{x\in{\cal{V}}_{xspan}^{x_j}}T(x,y)$. So rank$(T({\cal{V}}_{xu,x_j}^{i}+x_j,{\cal{V}}_{yu,x_j}^{i}+y))=k$ or $k+1$ is equivalent to $T(x_j,y)=\sum_{x\in{\cal{V}}_{xspan}^{x_j}}T(x,y)$ or not. \end{proof}

Lemma \ref{lemma2} says that given ${\cal{V}}_{xspan}^{x}$ we can simplify the rank computation of the associated binary adjacency matrix in deciding whether an edge $(x,y)$ qualifies for a type $1$ admissible forwarding move when $x$ of type $1$, type $2$ or type $3$ is explored. We also know that when $x'$ is explored as a type $2$ node right after the type $2$ forwarding move along path $P_{x\rightarrow x'}$ for some $x$ of type $1$ or type $3$, ${\cal{V}}_{xspan}^{x'}={\cal{V}}_{xspan}^{x}+x-x'$, so we don't need to compute ${\cal{V}}_{xspan}^{x'}$ for any $x'$ of type $2$ being explored as long as ${\cal{V}}_{xspan}^{x}$ is computed.


\begin{lemma}\label{lemma4}
In the last iteration $K+1$ of our algorithm when no more $S-D$ path is found by MDFS, all ${\cal{N}}(x_k)$ with $x_k\in{\cal{V}}_{xspan}^{x_j}$ and all ${\cal{N}}(y)$ with $y\in Y_{x_j,fr}^{K+1}$ must have been explored by MDFS before MDFS returns if $x_j$ has ever been explored.
\end{lemma}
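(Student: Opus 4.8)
The plan is to exploit two features of the final, failing iteration. Since MDFS returns False, no partial path ever reached $D$, so by the exhaustion property of depth-first search every super node that was explored had all of its admissible forwarding moves attempted before MDFS backtracked out of it. Combined with the fact that the instantaneous sets ${\cal{V}}_{xspan}$ and $Y_{fr}^{K+1}$ transform in a tightly controlled way under a type $2$ move, this will let me reduce the whole statement to the case where $x_j$ is of type $1$ or type $3$, and then read off both assertions directly from Definition \ref{definition2}.

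First I would carry out the reduction. If $x_j$ is a type $2$ node, then by the redundancy rule (a type $2$ node never initiates a type $2$ move) it was produced by a single type $2$ move along some path $P_{x\rightarrow x_j}$ out of a type $1$ or type $3$ node $x$ that was explored earlier in the same layer cut $i={\cal{L}}(x_j)={\cal{L}}(x)$. I would record the two identities that this move induces on the instantaneous sets: the receiving set is unchanged, ${\cal{V}}_{yu,x_j}^{i}={\cal{V}}_{yu,x}^{i}$ (each $y_\ell$ on the augmenting path loses one incident used edge and simultaneously gains another), while the transmitting set obeys ${\cal{V}}_{xu,x_j}^{i}={\cal{V}}_{xu,x}^{i}+x-x_j$, equivalently ${\cal{V}}_{xspan}^{x_j}={\cal{V}}_{xspan}^{x}+x-x_j$, exactly the relation already used to justify the redundancy rule. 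Substituting gives ${\cal{V}}_{xu,x_j}^{i}+x_j={\cal{V}}_{xu,x}^{i}+x$, hence $Y_{x_j,fr}^{K+1}=Y_{x,fr}^{K+1}$, and also $\{x_j\}\cup{\cal{V}}_{xspan}^{x_j}=\{x\}\cup{\cal{V}}_{xspan}^{x}$. Because $x_j$ and $x$ are both explored, establishing the lemma for the type $1$/type $3$ node $x$ transfers verbatim to $x_j$.

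Next, for $x_j$ of type $1$ or type $3$, the first assertion is immediate from Definition \ref{definition2}: when $x_j$ is explored it is permitted to start a type $2$ move along $P_{x_j\rightarrow x'}$ for every $x'\in{\cal{V}}_{xspan}^{x_j}$ (these paths exist by Lemma \ref{lemma5}), and each such move makes ${\cal{N}}(x')$ the next super node explored; exhaustion forces every one of these moves to be attempted, so every ${\cal{N}}(x')$ with $x'\in{\cal{V}}_{xspan}^{x_j}$ is explored. For the second assertion I would take $y\in Y_{x_j,fr}^{K+1}$ and invoke Lemma \ref{lemma2}, which yields $Y_{x_j,fr}^{K+1}\subseteq Y_{x_j,c}^{K+1}$, so there is an edge $(x,y)\in{\cal{E}}$ with $x\in\{x_j\}\cup{\cal{V}}_{xspan}^{x_j}$. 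If $x=x_j$, then $(x_j,y)$ meets every condition of a type $1$ move ($y\notin{\cal{V}}_{yu,x_j}^{i}$ and the rank rises to $K+1$ since $y\in Y_{x_j,fr}^{K+1}$), so MDFS either performs it and explores ${\cal{N}}(y)$, or finds ${\cal{N}}(y)$ already labeled explored; either way ${\cal{N}}(y)$ is explored. If instead $x\in{\cal{V}}_{xspan}^{x_j}$, then $x$ was explored as a type $2$ node by the first assertion, the invariance above gives $y\in Y_{x,fr}^{K+1}$, and since type $2$ nodes do initiate type $1$ moves, the same argument applied at $x$ along $(x,y)$ forces ${\cal{N}}(y)$ to be explored.

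I expect the main obstacle to be the bookkeeping that pins down how ${\cal{V}}_{xu}^{i}$ and ${\cal{V}}_{yu}^{i}$ evolve under a type $2$ move and the resulting invariance $Y_{x_j,fr}^{K+1}=Y_{x,fr}^{K+1}$, since this is precisely what makes both the reduction to type $1$/type $3$ nodes and the ``edge comes from a span node'' subcase go through. The second ingredient, that a False return certifies every admissible move was attempted, should follow from the standard depth-first exhaustion property once I note that in a failing iteration no move is ever cut short by reaching $D$. A minor point I would state explicitly is the distinction between a type $1$ move being \emph{performed} and being \emph{blocked} because its target super node is already labeled explored; in both situations the target super node is explored, which is all the statement requires.
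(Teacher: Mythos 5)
Your proof is correct and follows essentially the same route as the paper's: exhaustion of all admissible moves in the failing iteration, the invariance of the instantaneous sets under a type $2$ move (${\cal{V}}_{xspan}^{x'}={\cal{V}}_{xspan}^{x}+x-x'$, ${\cal{V}}_{xu,x'}^{i}+x'={\cal{V}}_{xu,x}^{i}+x$, ${\cal{V}}_{yu,x'}^{i}={\cal{V}}_{yu,x}^{i}$), and Lemma \ref{lemma2} to locate an edge $(x,y)$ with $x\in\{x_j\}\cup{\cal{V}}_{xspan}^{x_j}$ for the second assertion. The only differences are presentational --- you reduce type $2$ nodes to their type $1$/type $3$ parent up front whereas the paper extends the type $1$/type $3$ argument to type $2$ nodes afterwards, and your explicit note that a type $1$ move blocked by an already-explored target still yields the desired conclusion is a small clarification the paper leaves implicit.
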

\begin{proof}
Since there is no more $S-D$ path exists in the last iteration of our algorithm, all possible admissible moves would be tried by MDFS. It means each ${\cal{N}}(x_k)$ with $x_k\in{\cal{V}}_{xspan}^{x_j}$ would be explored by following a type $2$ forwarding move along the path $P_{x_j\rightarrow x_k}$ if $x_j$ of type $1$ or type $3$ has ever been explored. For any $x_i$ of type $2$ being explored right after a type $2$ move along the path $P_{x_j\rightarrow x_i}$ starting from some $x_j$ of type $1$ or type $3$, we know that $x_j$ is explored and ${\cal{V}}_{xspan}^{x_i}={\cal{V}}_{xspan}^{x_j}+x_j-x_i$. So if all ${\cal{N}}(x_k)$ with $x_k\in{\cal{V}}_{xspan}^{x_j}$ are explored, all ${\cal{N}}(x_k')$ with $x_k'\in{\cal{V}}_{xspan}^{x_i}$ are also explored.

From Lemma \ref{lemma2}, for $\forall y\in Y_{x_j,fr}^{K+1}$, there is $(x,y)\in{\cal{E}}$ for some $x\in\{x_j\cup{\cal{V}}_{xspan}^{x_j}\}$. We already proved that all ${\cal{N}}(x_k)$ with $x_k\in{\cal{V}}_{xspan}^{x_j}$ would be explored by MDFS in the last iteration of our algorithm if $x_j$ has ever been explored. So the edge $(x,y)$ with $x\in\{x_j\cup{\cal{V}}_{xspan}^{x_j}\}$ and $y\in Y_{x_j,fr}^{K+1}$ must be considered by MDFS in the last iteration given that $x_j$ has ever been explored. If $x=x_j$, $(x,y)$ allows a type $1$ forwarding move when $x_j$ is explored and then ${\cal{N}}(y)$ will be explored. Now assume $x\in{\cal{V}}_{xspan}^{x_j}$. Given $x_j$ is explored, $x$ is also explored and ${\cal{V}}_{xu,x}^{i}+x={\cal{V}}_{xu,x_j}^{i}+x_j$ and ${\cal{V}}_{yu,x}^{i}={\cal{V}}_{yu,x_j}^{i}$. Given $y\in Y_{x_j,fr}^{K+1}$, i.e., rank$(T({\cal{V}}_{xu,x_j}^{i}+x_j,{\cal{V}}_{yu,x_j}^{i}+y))=K+1$,
we have rank$(T({\cal{V}}_{xu,x}^{i}+x,{\cal{V}}_{yu,x}^{i}+y))=K+1$. Then $(x,y)$ allows a type $1$ forwarding move when $x$ is explored and then ${\cal{N}}(y)$ will be explored. So all ${\cal{N}}(y)$ with $y\in Y_{x_j,fr}^{K+1}$ must be explored by MDFS in the last iteration of our algorithm if $x_j$ has ever been explored.
\end{proof}

Table \ref{table1} gives a pseudo-code description of the MDFS algorithm described above. Refer to \cite{myhomepage} for an implementation of the algorithm proposed in the current paper.

\begin{table}[htbp]
\caption{}
\begin{center}
\begin{tabular}{l}
\hline
\hline
(Res=\{True, False\},${\cal{P}}'$)=\textbf{MDFS}(${\cal{G}},{\cal{P}},{\cal{P}}',N,D,k$)\\
$\left\{\begin{array}{l}
{\mbox{SetLabel}}(N,{\mbox{explored}})\\
{\mbox{for }}\forall x\in{\cal{V}}_{x,N}{\mbox{ with }}x\not\in{\cal{E}}_{xu}^{{\cal{L}}(N)}{\mbox{ and GetLabel}}(x)={\mbox{unexplored and GetType}}(x)=2\\
\left\{\begin{array}{l}
{\mbox{SetLabel}}(x,{\mbox{explored}})\\
{\mbox{for }}\forall e=(x,y)\in {\cal{G}}.{\mbox{incidentEdgeType1}}(x){\mbox{ and GetLabel}}({\cal{N}}(y))={\mbox{unexplored}}\\
\left\{\begin{array}{l}
{\mbox{if }}T(x,y)\neq\sum_{x'\in{\cal{V}}_{xspan}^{x}}T(x',y)\\
\left\{\begin{array}{l}
{\cal{E}}^{{\cal{L}}(N)}_u\leftarrow{\cal{E}}^{{\cal{L}}(N)}_u+e; {\mbox{ Update}}({\cal{P}}')\\
{\mbox{if }}{\cal{N}}(y)=D{\mbox{ }}\{{\mbox{ return (True,}}{\cal{P}}'){\mbox{ }}\}\\
{\mbox{else}}
\left\{\begin{array}{l}
({\mbox{Res}},{\cal{P}}')={\mbox{MDFS}}({\cal{G}},{\cal{P}},{\cal{P}}',{\cal{N}}(y),D,k)\\
{\mbox{if Res}}={\mbox{True }}\{{\mbox{ return (True,}}{\cal{P}}'){\mbox{ }}\}\\
 \end{array}\right.\\
{\cal{E}}^{{\cal{L}}(N)}_u\leftarrow{\cal{E}}^{{\cal{L}}(N)}_u-e; {\mbox{ Restore}}({\cal{P}}')\\
\end{array}\right.\\
\end{array}\right.\\
\end{array}\right.\\
{\mbox{for }}\forall x\in{\cal{V}}_{x,N}{\mbox{ with }}x\not\in{\cal{E}}_{xu}^{{\cal{L}}(N)}{\mbox{ and GetLabel}}(x)={\mbox{unexplored and GetType}}(x)\neq2\\
{\mbox{(first for $x$ with GetType}}(x)=3{\mbox{, then for $x$ with GetType}}(x)=1)\\
\left\{\begin{array}{l}
{\mbox{SetLabel}}(x,{\mbox{explored}})\\
{\cal{V}}_{xspan}^{x}={\mbox{Span}}(x)\\
{\mbox{for }}\forall e=(x,y)\in {\cal{G}}.{\mbox{incidentEdgeType1}}(x){\mbox{ and GetLabel}}({\cal{N}}(y))={\mbox{unexplored}}\\
\left\{\begin{array}{l}
{\mbox{if }}T(x,y)\neq\sum_{x'\in{\cal{V}}_{xspan}^{x}}T(x',y)\\
\left\{\begin{array}{l}
{\cal{E}}^{{\cal{L}}(N)}_u\leftarrow{\cal{E}}^{{\cal{L}}(N)}_u+e; {\mbox{ Update}}({\cal{P}}')\\
{\mbox{if }}{\cal{N}}(y)=D{\mbox{ }}\{{\mbox{ return (True,}}{\cal{P}}'){\mbox{ }}\}\\
{\mbox{else}}
\left\{\begin{array}{l}
({\mbox{Res}},{\cal{P}}')={\mbox{MDFS}}({\cal{G}},{\cal{P}},{\cal{P}}',{\cal{N}}(y),D,k)\\
{\mbox{if Res}}={\mbox{True }}\{{\mbox{ return (True,}}{\cal{P}}'){\mbox{ }}\}\\
 \end{array}\right.\\
{\cal{E}}^{{\cal{L}}(N)}_u\leftarrow{\cal{E}}^{{\cal{L}}(N)}_u-e; {\mbox{ Restore}}({\cal{P}}')\\
\end{array}\right.\\
\end{array}\right.\\
P_{x\rightarrow{\cal{V}}_{xspan}^{x}}={\mbox{FindIndPaths}}(x,{\cal{V}}_{xspan}^{x})\\
{\mbox{for }}\forall x'\in{\cal{V}}_{xspan}^{x}{\mbox{ with }}P_{x\rightarrow x'}=\{e_1,e_2,...e_{2m}\}=\{(x,y_1),(y_1,x_1),(x_1,y_2),...(y_m,x_m)=(y_m,x')\}\\
\left\{\begin{array}{l}
{\mbox{SetLabel}}(x', {\mbox{unexplored); SetType}}(x',2); {\cal{V}}_{xspan}^{x'}={\cal{V}}_{xspan}^{x}-x'+x\\
{\cal{E}}^{{\cal{L}}(N)}_u\leftarrow{\cal{E}}^{{\cal{L}}(N)}_u+e_1-e_2+e_3-e_4+...+e_{2m-1}-e_{2m};{\mbox{ Update}}({\cal{P}}')\\
({\mbox{Res}},{\cal{P}}')={\mbox{MDFS}}({\cal{G}},{\cal{P}},{\cal{P}}',{\cal{N}}(x'),D,k)\\
{\mbox{if Res}}={\mbox{True }}\{{\mbox{ return (True,}}{\cal{P}}'){\mbox{ }}\}\\
{\cal{E}}^{{\cal{L}}(N)}_u\leftarrow{\cal{E}}^{{\cal{L}}(N)}_u-e_1+e_2-e_3+e_4-...-e_{2m-1}+e_{2m};{\mbox{ Restore}}({\cal{P}}')\\
\end{array}\right.\\
\end{array}\right.\\
{\mbox{for }}\forall e=(x,y)\in{\cal{G}}.{\mbox{incidentEdgeType3}}(N){\mbox{ with GetLabel}}(y)={\mbox{unexplored}}\\
\left\{\begin{array}{l}
{\mbox{SetLabel}}(y,{\mbox{explored}})\\
{\mbox{SetLabel}}(x,{\mbox{unexplored); SetType}}(x,3)\\
{\cal{E}}^{{\cal{L}}(N)-1}_u\leftarrow{\cal{E}}^{{\cal{L}}(N)-1}_u-e;{\mbox{ Update}}({\cal{P}}')\\
({\mbox{Res}},{\cal{P}}')={\mbox{MDFS}}({\cal{G}},{\cal{P}},{\cal{P}}',{\cal{N}}(x),D,k)\\
{\mbox{if Res}}={\mbox{True }}\{{\mbox{ return (True,}}{\cal{P}}'){\mbox{ }}\}\\
{\cal{E}}^{{\cal{L}}(N)-1}_u\leftarrow{\cal{E}}^{{\cal{L}}(N)-1}_u+e;{\mbox{ Restore}}({\cal{P}}')\\
\end{array}\right.\\
{\mbox{return (False,}}{\cal{P}}')\\
\end{array}\right.$\\
\hline
\hline
\end{tabular}
\end{center}
\label{table1}
\end{table}

\section{Algorithm Analysis\label{analysis}}

\subsection{Complexity Analysis}\label{complexity}

\begin{theorem}\label{theorem2}
MDFS algorithm defined in Section \ref{ouralgo} terminates in finite time as the total number of explorations to transmitting nodes invoked by MDFS is bounded by $O(|{\cal{V}}_x|\cdot k)$ in iteration $k$ of our algorithm. The total computational complexity of our algorithm is bounded by $O(|{\cal{V}}_x|\cdot C^4+d\cdot |{\cal{V}}_x|\cdot C^3)$ if our algorithm stops after finding $C$ linearly independent $S-D$ paths.
\end{theorem}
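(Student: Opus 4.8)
The plan is to prove the two assertions in sequence: first a purely combinatorial bound on the number of transmitting-node explorations MDFS performs in one iteration, and then a per-exploration cost analysis that I sum over all iterations to get the overall complexity. Termination falls out of the first bound, since each exploration does only bounded work.

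For the exploration count I would classify every exploration of a transmitting node by the type ($1$, $2$, or $3$) the node carries when explored, and bound each class. A node is explored as a type $1$ node at most once: the algorithm initializes all nodes to type $1$ and no operation ever resets a type back to $1$ (relabelings produce only type $2$ or type $3$). For type $3$, I would use that every type $3$ forwarding move permanently marks its receiving endpoint $y$ explored (this label is sticky, as the backtracking ``Restore'' step does not revive it) and that such a move requires a currently-used edge $(x,y)$ with $y\in\mathcal{V}_{y\mathcal{P}}$; from this I argue each transmitting node is explored only $O(1)$ times as a type $3$ node. Hence the number of type-$1$-or-type-$3$ explorations, which is exactly the number of Span and FindIndPaths invocations, is $O(|\mathcal{V}_x|)$. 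Each such exploration launches at most one type $2$ move per element of $\mathcal{V}_{xspan}^{x}$, and since $|\mathcal{V}_{xspan}^{x}|\le k$ in iteration $k$, the number of type $2$ moves, and therefore of type $2$ explorations, is at most $k\cdot O(|\mathcal{V}_x|)=O(|\mathcal{V}_x|\cdot k)$. Summing the three classes gives the claimed $O(|\mathcal{V}_x|\cdot k)$ total, and finiteness of this count yields termination.

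For the complexity I would attach a cost to each exploration and then sum. By Lemma \ref{lemma1} each Span call costs $O(k^3)$ and by Lemma \ref{lemma5} each FindIndPaths call costs $O(k^2)$; since both are invoked only $O(|\mathcal{V}_x|)$ times in iteration $k$, they contribute $O(|\mathcal{V}_x|\cdot k^3)$ and $O(|\mathcal{V}_x|\cdot k^2)$. For the type $1$ admissibility tests, Lemma \ref{lemma2} replaces each rank computation by the check $T(x,y)=\sum_{x'\in\mathcal{V}_{xspan}^{x}}T(x',y)$ at cost $O(k)$; examining the at most $d$ incident edges of each explored node costs $O(dk)$ per exploration, so over all $O(|\mathcal{V}_x|\cdot k)$ explorations this is $O(d\cdot|\mathcal{V}_x|\cdot k^2)$. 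The remaining bookkeeping (updating $\mathcal{P}'$ and $\mathcal{E}_u$ along alternating paths of length $O(k)$) is dominated by these terms. Thus iteration $k$ costs $O(|\mathcal{V}_x|\cdot k^3+d\cdot|\mathcal{V}_x|\cdot k^2)$, and summing over $k=1,\dots,C+1$ with $\sum_k k^3=O(C^4)$ and $\sum_k k^2=O(C^3)$ yields the stated $O(|\mathcal{V}_x|\cdot C^4+d\cdot|\mathcal{V}_x|\cdot C^3)$.

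The main obstacle is the exploration-count step, specifically controlling re-explorations created by relabeling. The type $1$ bound is immediate and the type $2$ bound follows once the type-$1$/type-$3$ count is fixed, so the crux is proving that a transmitting node can be explored only a bounded number of times as a type $3$ node, even though the matching $\mathcal{E}_u$ keeps changing under type $2$ moves. I expect to establish this through an invariant linking each type $3$ relabeling of $x$ to the permanent ``explored'' marking of a distinct receiving node reachable from $x$, so that the sticky labels preclude unbounded recurrence; making this invariant precise, and verifying that no ``Restore'' operation reactivates the relevant labels within an iteration, is where the care is needed.
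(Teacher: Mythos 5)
Your proposal is correct and follows essentially the same route as the paper: the paper also splits explorations by type, bounds type 1 by $|{\cal{V}}_x|$ (labels never revert to unexplored type 1), bounds type 3 by $|{\cal{V}}_{y{\cal{P}}}|\leq|{\cal{V}}_x|$ via exactly the sticky-label argument you sketch (each type 3 move permanently marks a distinct $y\in{\cal{V}}_{y{\cal{P}}}$ explored), bounds type 2 by $k$ per type-1/type-3 exploration, and then charges $O(k^3)+O(k^2)$ per Span/FindIndPaths call and $O(dk)$ per exploration for the Lemma \ref{lemma2} membership tests before summing $\sum_k k^3=O(C^4)$ and $\sum_k k^2=O(C^3)$. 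The "crux" you flag at the end is resolved in the paper exactly as you anticipate, so no gap remains.
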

\begin{proof}
In MDFS, a transmitting node is explored or re-explored exclusively as an unexplored type $1$, type $2$ or type $3$ node. Let $k_1$, $k_2$ and $k_3$ be the total number of labeling of a transmitting node as unexplored type $1$, type $2$ and type $3$ node respectively. To prove MDFS algorithm terminates, we can equivalently prove that $k_1$, $k_2$ and $k_3$ are finite.
\begin{claim}
$k_1\leq|{\cal{V}}_x|$.
\end{claim}
\begin{proof}
All nodes in ${\cal{V}}_x$ are labeled as unexplored type $1$ nodes at the beginning of the iteration. After an unexplored type $1$ node is explored, it's labeled as explored and never relabeled as unexplored type $1$ node again. So $k_1\leq|{\cal{V}}_x|$.
\end{proof}

\begin{claim}
$k_3\leq|{\cal{V}}_x|$.
\end{claim}
\begin{proof}
A node is labeled as unexplored type $3$ node only after a type $3$ admissible forwarding move. We next show that the number of type $3$ admissible forwarding moves is bounded by $|{\cal{V}}_x|$. A type $3$ admissible forwarding move only happens on an edge $(x,y)$ with $y\in{\cal{V}}_{y{\cal{P}}}$ unexplored and y is labeled as explored after a type $3$ move and never relabeled as unexplored again. Since $|{\cal{V}}_{y{\cal{P}}}|=|{\cal{V}}_{x{\cal{P}}}|\leq|{\cal{V}}_x|$, so the total number of type $3$ moves is bounded by $|{\cal{V}}_x|$. The total number of labeling of transmitting nodes as unexplored type $3$ nodes is therefore bounded by $|{\cal{V}}_x|$.
\end{proof}

\begin{claim}
$k_2\leq k\cdot(k_1+k_3)$.
\end{claim}
\begin{proof}
A node is labeled as unexplored type $2$ node only after a type $2$ admissible forwarding move. We next show that the total number of type $2$ moves is bounded by $k_2\leq k\cdot(k_1+k_3)$. A type $2$ admissible forwarding move only starts from some type $1$ or type $3$ node being explored. The total number of type $1$ or type $3$ nodes explored by MDFS is bounded by $2|{\cal{V}}_x|$. When a type $1$ or a type $3$ node $x$ is explored, the total number of type $2$ admissible forwarding moves that it starts is bounded by $|{\cal{V}}_{xspan}^x|\leq k$. So the total number of type $2$ moves is bounded by $2\cdot k\cdot|{\cal{V}}_x|$. The total number of labeling of transmitting nodes as unexplored type $2$ nodes is therefore bounded by $2\cdot k\cdot|{\cal{V}}_x|$.
\end{proof}

Now it is easy to conclude that the total number of explorations to transmitting nodes invoked by MDFS is bounded by $O(|{\cal{V}}_x|\cdot k)$ in iteration $k$ of our algorithm, so MDFS terminates in finite time.

Now let's consider the computational complexity of our algorithm. We already proved that in iteration $k$ the total number of type $1$, type $2$ and type $3$ transmitting nodes explored by MDFS is bounded by $|{\cal{V}}_x|$, $2|{\cal{V}}_x|\cdot k$ and $|{\cal{V}}_x|$ respectively. The worst case in computation for all these transmitting nodes is that (1) MDFS computes ${\cal{V}}_{xspan}^{x}$ (with complexity bounded by $O(k^3)$ based on Lemma \ref{lemma1}) and finds the paths $P_{x\rightarrow{\cal{V}}_{xspan}^x}$ (with complexity bounded by $O(k^2)$ based on Lemma \ref{lemma5}) for any $x$ being explored as type $1$ or type $3$ node and (2) MDFS checks the rank of the binary adjacency matrix associated with each incident edge to $x$ (with complexity bounded by $O(k)$ based on Lemma \ref{lemma2}) for any $x$ being explored as type $1$, type $2$ or type $3$ node. Clearly, the total number of such edges is bounded by $2dk|{\cal{V}}_x|$ in iteration $k$. So the computational complexity of MDFS in iteration $k$ is bounded by $O(|{\cal{V}}_x|\cdot k^3+d\cdot|{\cal{V}}_x|\cdot k^2)$. The total computational complexity of our algorithm is bounded by $O(|{\cal{V}}_x|\cdot C^4+d\cdot |{\cal{V}}_x|\cdot C^3)$ if our algorithm stops after finding $C$ linearly independent $S-D$ paths.
\end{proof}

Our algorithm shows a significant improvement in terms of computational complexity over the algorithms for solving the same problem in \cite{aurore2009_combinatorial_algo_deterministic} by Amaudruz and Fragouli (whose complexity is bounded by $O(M\cdot |{\cal{E}}|\cdot C^5)$) and over the algorithm in \cite{sadegh2009_combinatorialstudyofdeterministic} by Yazdi and Savari (whose complexity is bounded by $O(L^8\cdot M^{12}\cdot h_0^3+L\cdot M^6\cdot C\cdot h_0^4)$). Note that here $M\geq d$ (because each transmitting node can transmit at most one bit information to each super node by definition of the deterministic channel model), $|{\cal{E}}|\geq|{\cal{V}}_x|$ (because of broadcasting) and $h_0\geq C$ (because $C$ cannot be larger than the maximum number of transmitting nodes among all layers).

\subsection{Proof of Correctness\label{proof}}

\begin{theorem}\label{theorem1}
Our algorithm finds $C$ linearly independent paths in a linear layered deterministic relay network ${\cal{G}}$ where $C$ is the unicast capacity (or the minimum cut value among all cuts separating the source from the destination) of ${\cal{G}}$.
\end{theorem}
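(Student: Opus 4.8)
The goal is to prove $K=C$, where $K$ is the number of linearly independent $S$-$D$ paths the algorithm reports and $C$ is, by the capacity characterization of \cite{amir2007_deterministicmodel}\cite{amir2007_wirelessnetworkinfoflow}, the minimum cut value over all $S$-$D$ cuts. I would split this into $K\le C$ and $C\le K$, the first being routine and the second carrying all the weight. For $K\le C$, Lemma \ref{lemm9} shows that the $K$ linearly independent paths realize a transmission scheme of rate $K$, so $K$ cannot exceed the capacity $C$. Stated purely combinatorially, this is the assertion that every $S$-$D$ cut $\Omega$ satisfies ${\rm rank}\,T({\cal E}_\Omega)\ge K$: the rank-$K$ transfer matrix of the $K$ paths (built from the one-one relay assignment of Lemma \ref{lemm9}) factors through $T({\cal E}_\Omega)$, forcing that factor to have rank at least $K$; minimizing over cuts gives $C\ge K$.

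The reverse inequality $C\le K$ is the heart of the proof, and I would establish it by exhibiting a single cut of value at most $K$ extracted from the failed search of iteration $K+1$. When MDFS returns False it has not reached $D$ and ${\cal P}'$ has been restored to the $K$ paths of ${\cal P}$; let $\Omega$ consist of $S$ together with all super nodes ever labelled explored during this last call. Since $D$ was never reached, $D\in\Omega^c$, so $\Omega$ is a legitimate $S$-$D$ cut. Because the network is layered, edges of distinct layer cuts act on disjoint transmitting and receiving node sets, so $T({\cal E}_\Omega)$ is block diagonal and ${\rm rank}\,T({\cal E}_\Omega)=\sum_i {\rm rank}\,T({\cal E}_\Omega^i)$, where ${\cal E}_\Omega^i$ collects the crossing edges of layer cut $i$. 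It therefore suffices to bound this sum by $K$.

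The bound comes from the closure property of Lemma \ref{lemma4} combined with a crossing count. Fix a crossing edge $(x,y)\in{\cal E}_\Omega^i$, so ${\cal N}(x)$ is explored while ${\cal N}(y)$ is not. Lemma \ref{lemma4} forces every ${\cal N}(x')$ with $x'\in{\cal V}_{xspan}^{x}$ and every ${\cal N}(y')$ with $y'\in Y^{K+1}_{x,fr}$ to be explored once $x$ is; since ${\cal N}(y)$ is unexplored we must have $y\notin Y^{K+1}_{x,fr}$, which by Lemma \ref{lemma2} says exactly that $T(x,y)=\sum_{x'\in{\cal V}_{xspan}^{x}}T(x',y)$, i.e. $(x,y)$ adds no rank over the used edges already present in layer cut $i$. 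Pushing this through, the row of every explored transmitting node reduces (on the crossing columns) to a sum of rows of used transmitting nodes, so ${\rm rank}\,T({\cal E}_\Omega^i)$ equals the number of retained path-edges that actually cross the frontier in layer $i$. As each of the $K$ retained paths runs from $S\in\Omega$ to $D\in\Omega^c$ and crosses the frontier exactly once, these crossing path-edges number $K$ in total and are independent (they lie in distinct blocks and within a block are a subset of the rank-$K$ set ${\cal E}_u^i$). Hence $\sum_i{\rm rank}\,T({\cal E}_\Omega^i)\le K$ and $C\le K$, which together with $K\le C$ yields $K=C$.

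The step I expect to be the main obstacle is making the third paragraph fully rigorous. Two points need care. First, one must reconcile the \emph{instantaneous} sets ${\cal V}_{xu,x}^{i},{\cal V}_{yu,x}^{i}$ recorded when $x$ was explored (which drive Lemma \ref{lemma4}) with the \emph{final} restored configuration ${\cal E}_u={\cal E}_{\cal P}$ against which the cut is evaluated. Second, one must show that on a genuinely cross-layer frontier each retained path leaves $\Omega$ once and never returns, so the per-layer ranks sum to exactly $K$ rather than overcounting; this is where the backward (type $3$) moves interact with the frontier, and controlling them, leaning on the invariants of Lemma \ref{lemma7}, is where the real bookkeeping lies.
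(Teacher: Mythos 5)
Your overall architecture coincides with the paper's: you split $K=C$ into $K\le C$ (via Lemma \ref{lemm9} and the min-cut characterization) and $C\le K$, and for the latter you take exactly the paper's cut $\Omega_K$ (explored versus unexplored super nodes in the failed iteration $K+1$), use the block-diagonal structure of $T({\cal{E}}_{\Omega_K})$, and invoke Lemmas \ref{lemma2} and \ref{lemma4} to argue that crossing edges add no rank beyond the used path edges. Up to that point you are on the paper's track.

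The gap is in your crossing count, and it is not deferrable bookkeeping --- it is the core of the paper's proof. Your claim that each of the $K$ retained paths ``crosses the frontier exactly once'' is false in general: $\Omega_K$ is not a layer cut, and because type $3$ moves pull the search backwards, a retained path can leave $\Omega_K$ and re-enter it, so the number of forward crossings $\sum_i K_i$ can strictly exceed $K$ (the paper states this explicitly in the proof of Lemma \ref{lemm7}). Consequently the per-layer rank is not the number $K_i$ of forward-crossing path edges; the paper proves (Lemma \ref{lemm6}) that it equals $K_i-K_{i4}$, where $K_{i4}$ counts path edges crossing backwards from $\Omega_K^c$ into $\Omega_K$ in layer $i$, and then recovers $K$ from the telescoping net-crossing identity $\sum_i K_i-\sum_i K_{i4}=K$ of Lemma \ref{lemm7}. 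Moreover, establishing ${\rm rank}(T^i_{\Omega_K})=K_i-K_{i4}$ is where the real difficulty sits: the rows of block $i$ also include explored path transmitting nodes whose own path edges stay inside $\Omega_K$ (the set ${\cal{V}}^i_{x{\cal{P}}2}$), and your reduction via Lemma \ref{lemma2} controls each explored node only relative to the \emph{instantaneous} sets ${\cal{V}}_{xu,x}^{i},{\cal{V}}_{yu,x}^{i}$ at the moment it was explored, not relative to the final restored configuration ${\cal{E}}_{\cal{P}}$ against which the cut is evaluated. Bridging that mismatch is precisely what the paper's Lemmas \ref{lemm2}--\ref{lemm5} accomplish (the induction over forwarding and backtracking moves, and the contradiction argument tracking the last-added/first-deleted receiving node $y''$), and nothing in your sketch substitutes for it; you correctly name both difficulties in your final paragraph, but what you defer is not a technicality --- it is the majority of the paper's argument.
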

\begin{proof}
We prove Theorem \ref{theorem1} by proving that when our algorithm stops the number of paths we find equals some cut value in ${\cal{G}}$.

If MDFS returns Res=False in iteration $K+1$ of our algorithm, then it finds the maximum number $K$ of linearly independent $S-D$ paths in ${\cal{G}}$ and we claim $K=C$. Consider the cut $\Omega_K$ separating the super nodes labeled explored from the super nodes labeled unexplored in iteration $K+1$ when the algorithm stops so that $S\in\Omega_K$.
Clearly $\Omega_K$ is a cut separating $S$ from $D$ and $S\in\Omega_K$, $D\in\Omega_K^c$. We prove Theorem \ref{theorem1} by proving that this cut value equals $K$, i.e, rank$(T({{\cal{E}}_{\Omega_K}}))=K$.

Let ${\cal{E}}_{{\cal{P}}}^i=\{(x,y):(x,y)\in {\cal{E}}_{{\cal{P}}} {\mbox{ and }} {\cal{L}}(x)=i\}$. Let ${\cal{V}}^i_{x{\cal{P}}}=\{x: (x,y)\in {\cal{E}}_{{\cal{P}}}^i\}$ and ${\cal{V}}^i_{y{\cal{P}}}=\{y: (x,y)\in {\cal{E}}_{{\cal{P}}}^i\}$.
We divide the set ${\cal{E}}_{{\cal{P}}}^i$ into four subgroups: ${\cal{E}}_{{\cal{P}}1}^i=\{(x,y):(x,y)\in{\cal{E}}_{{\cal{P}}}^i,{\cal{N}}(x)\in\Omega_K,{\cal{N}}(y)\in\Omega_K^c\}$, ${\cal{E}}_{{\cal{P}}2}^i=\{(x,y):(x,y)\in{\cal{E}}_{{\cal{P}}}^i,{\cal{N}}(x)\in\Omega_K,{\cal{N}}(y)\in\Omega_K\}$, ${\cal{E}}_{{\cal{P}}3}^i=\{(x,y):(x,y)\in{\cal{E}}_{{\cal{P}}}^i,{\cal{N}}(x)\in\Omega_K^c,{\cal{N}}(y)\in\Omega_K^c\}$ and ${\cal{E}}_{{\cal{P}}4}^i=\{(x,y):(x,y)\in{\cal{E}}_{{\cal{P}}}^i,{\cal{N}}(x)\in\Omega_K^c,{\cal{N}}(y)\in\Omega_K\}$. We divide the sets ${\cal{V}}_{x{\cal{P}}}^i$ (${\cal{V}}_{y{\cal{P}}}^i$) into four subgroups accordingly, ${\cal{V}}_{x{\cal{P}}j}^i$ (${\cal{V}}_{y{\cal{P}}j}^i$), $1\leq j\leq 4$. Clearly, the subgroups ${\cal{V}}_{x{\cal{P}}j}^i,1\leq j\leq 4$ are disjoint, so are true for subgroups ${\cal{V}}_{y{\cal{P}}j}^i,1\leq j\leq 4$. Denote $|{\cal{E}}_{{\cal{P}}1}^i|=|{\cal{V}}_{x{\cal{P}}1}^i|=|{\cal{V}}_{y{\cal{P}}1}^i|=K_i$. Clearly $K_i$ is the number of our identified paths (or path edges) that cross the cut $\Omega_K$ in layer cut $i$. Denote $|{\cal{E}}_{{\cal{P}}2}^i|=K_{i2}$, $|{\cal{E}}_{{\cal{P}}3}^i|=K_{i3}$ and $|{\cal{E}}_{{\cal{P}}4}^i|=K_{i4}$. Clearly, $K_i+K_{i2}+K_{i3}+K_{i4}=K$.

Let ${\cal{E}}^i_{\Omega_K}=\{(x,y): (x,y)\in {\cal{E}}, {\cal{N}}(x)\in\Omega_K, {\cal{N}}(y)\in\Omega_K^c {\mbox{ and }} {\cal{L}}(x)=i\}, 1\leq i < L$, i.e., ${\cal{E}}^i_{\Omega_K}$ is the intersection of the cut $\Omega_K$ and layer cut $i$. Consider the adjacency matrix $T({\cal{E}}_{{\Omega_K}})$ for $\Omega_K$. It is a block diagonal matrix with each block $T^i_{\Omega_K}$ being the adjacency matrix for ${\cal{E}}^i_{\Omega_K}$ and rank$(T({\cal{E}}_{{\Omega_K}}))=\sum_{i=1}^{L-1}$ rank$(T_{\Omega_K}^i)$. Let ${\cal{V}}^i_{x\Omega_K}=\{x: (x,y)\in {\cal{E}}^i_{\Omega_K}\}$ and ${\cal{V}}^i_{y\Omega_K}=\{y: (x,y)\in {\cal{E}}^i_{\Omega_K}\}$. We also divide the set ${\cal{V}}^i_{x\Omega_K}$ into three subgroups: ${\cal{V}}^i_{x\Omega_K1} = {\cal{V}}^i_{x{\cal{P}}1}$, ${\cal{V}}^i_{x\Omega_K2} = {\cal{V}}^i_{x{\cal{P}}2}$ and ${\cal{V}}^i_{x\Omega_K3} = {\cal{V}}^i_{x\Omega_K}-{\cal{V}}^i_{x\Omega_K1}-{\cal{V}}^i_{x\Omega_K2}$. Similarly, we divide the set ${\cal{V}}^i_{y\Omega_K}$ into three subgroups: ${\cal{V}}^i_{y\Omega_K1} = {\cal{V}}^i_{y{\cal{P}}1}$, ${\cal{V}}^i_{y\Omega_K2} = {\cal{V}}^i_{y{\cal{P}}3}$ and ${\cal{V}}^i_{y\Omega_K3} = {\cal{V}}^i_{y\Omega_K}-{\cal{V}}^i_{y\Omega_K1}-{\cal{V}}^i_{y\Omega_K2}$. Clearly, the subgroups ${\cal{V}}^i_{x\Omega_Kj},1\leq j\leq 3$ are disjoint, so are true for subgroups ${\cal{V}}^i_{y\Omega_Kj},1\leq j\leq 3$.

Denote ${\cal{E}}_{u,x_k}^{i}$, ${\cal{V}}_{xu,x_k}^{i}$ and ${\cal{V}}_{yu,x_k}^{i}$ as the instantaneous sets of ${\cal{E}}_u^i$, ${\cal{V}}_{xu}^i$ and ${\cal{V}}_{yu}^i$ when a transmitting node $x_k$ is being explored with $i={\cal{L}}(x_k)$ in iteration $K+1$ of our algorithm. We divide the set ${\cal{E}}_{u,x_k}^{i}$ into four subgroups as follows. Let ${\cal{E}}_{u,x_k,1}^{i}=\{(x,y)\in{\cal{E}}_{u,x_k}^{i},{\cal{N}}(x)\in\Omega_K,{\cal{N}}(y)\in\Omega_K^c\}$, ${\cal{E}}_{u,x_k,2}^{i}=\{(x,y)\in{\cal{E}}_{u,x_k}^{i},{\cal{N}}(x)\in\Omega_K,{\cal{N}}(y)\in\Omega_K\}$, ${\cal{E}}_{u,x_k,3}^{i}=\{(x,y)\in{\cal{E}}_{u,x_k}^{i},{\cal{N}}(x)\in\Omega_K^c,{\cal{N}}(y)\in\Omega_K^c\}$ and ${\cal{E}}_{u,x_k,4}^{i}=\{(x,y)\in{\cal{E}}_{u,x_k}^{i},{\cal{N}}(x)\in\Omega_K^c,{\cal{N}}(y)\in\Omega_K\}$. Divide ${\cal{V}}_{xu,x_k}^{i}$ and ${\cal{V}}_{yu,x_k}^{i}$ respectively into four subgroups ${\cal{V}}_{xu,x_k,j}^{i}$ and ${\cal{V}}_{yu,x_k,j}^{i}$ corresponding to ${\cal{E}}_{u,x_k,j}^{i},1\leq j\leq 4$. Clearly $|{\cal{E}}_{u,x_k,1}^{i}|+|{\cal{E}}_{u,x_k,2}^{i}|+|{\cal{E}}_{u,x_k,3}^{i}|+|{\cal{E}}_{u,x_k,4}^{i}|=K$ and rank$(T({\cal{E}}_{u,x_k}^{i}))=K$ for any $x_k$ ever explored by MDFS in iteration $K+1$ of our algorithm based on Lemma \ref{lemma7}.


In the following of this section, we prove a sequence of lemmas before we finally prove Theorem \ref{theorem1}. Unless otherwise stated, we assume that we are in the last iteration $K+1$ of our algorithm.

\begin{lemma}\label{lemm2}
Let $x_k$ be a transmitting node that has been explored by MDFS in iteration $K+1$ of our algorithm with ${\cal{L}}(x_k)=i$. When $x_k$ is explored, we have ${\cal{V}}_{xspan}^{x_k}\subseteq{\cal{V}}_{xu,x_k,1}^i+{\cal{V}}_{xu,x_k,2}^i$ and
\begin{eqnarray}
&&{\mbox{rank}}(T({\cal{V}}_{xu,x_k,1}^i+{\cal{V}}_{xu,x_k,2}^i+x_k,{\cal{V}}_{yu,x_k}^i))\label{eqn50}\\
&=&{\mbox{rank}}(T({\cal{V}}_{xu,x_k,1}^i+{\cal{V}}_{xu,x_k,2}^i,{\cal{V}}_{yu,x_k}^i))=|{\cal{V}}_{xu,x_k,1}^i|+|{\cal{V}}_{xu,x_k,2}^i|\label{eqn51}
\end{eqnarray}
For any $y\in{\cal{V}}_{y\Omega_K3}^i$, we have
\begin{eqnarray}
&&{\mbox{rank}}(T({\cal{V}}_{xu,x_k,1}^i+{\cal{V}}_{xu,x_k,2}^i+x_k,{\cal{V}}_{yu,x_k}^i+y))\\
&=&{\mbox{rank}}(T({\cal{V}}_{xu,x_k,1}^i+{\cal{V}}_{xu,x_k,2}^i,{\cal{V}}_{yu,x_k}^i+y))=|{\cal{V}}_{xu,x_k,1}^i|+|{\cal{V}}_{xu,x_k,2}^i|
\end{eqnarray}
\end{lemma}
\begin{proof}
Based on Lemma \ref{lemma4}, all ${\cal{N}}(x_j)$ with $x_j\in{\cal{V}}_{xspan}^{x_k}$ will finally be explored by MDFS in iteration $K+1$ if $x_k$ has ever been explored. Since for any $x\in{\cal{V}}_{xu,x_k,3}^i+{\cal{V}}_{xu,x_k,4}^i$, ${\cal{N}}(x)$ is not explored by MDFS, so we have ${\cal{V}}_{xspan}^{x_k}\subseteq{\cal{V}}_{xu,x_k,1}^i+{\cal{V}}_{xu,x_k,2}^i$. By definition of ${\cal{V}}_{xspan}^{x_k}$, it's easy to conclude that (\ref{eqn50})-(\ref{eqn51}) hold. By definition ${\cal{N}}(y)$ is not explored by MDFS for any $y\in{\cal{V}}_{y\Omega_K3}^i$. Based on Lemma \ref{lemma4}, we have rank$(T({\cal{V}}_{xu,x_k}^i+x_k,{\cal{V}}_{yu,x_k}^i+y))=K$. Given rank$(T({\cal{V}}_{xu,x_k}^i,{\cal{V}}_{yu,x_k}^i+y))=K$ and the fact that ${\cal{V}}_{xspan}^{x_k}\subseteq{\cal{V}}_{xu,x_k}^i$ is the unique set satisfying rank$(T({\cal{V}}_{xu,x_k}^i+x_k,{\cal{V}}_{yu,x_k}^i))=$rank$(T({\cal{V}}_{xu,x_k}^i,{\cal{V}}_{yu,x_k}^i))=K$, we conclude rank$(T({\cal{V}}_{xu,x_k,1}^i+{\cal{V}}_{xu,x_k,2}^i+x_k,{\cal{V}}_{yu,x_k}^i+y))=$rank$(T({\cal{V}}_{xu,x_k,1}^i+{\cal{V}}_{xu,x_k,2}^i,{\cal{V}}_{yu,x_k}^i+y))=|{\cal{V}}_{xu,x_k,1}^i|+|{\cal{V}}_{xu,x_k,2}^i|$.
\end{proof}

\begin{lemma}
For any $x_k$ with ${\cal{L}}(x_k)=i$ explored by MDFS in iteration $K+1$ of our algorithm, $|{\cal{E}}_{u,x_k,1}^i|+|{\cal{E}}_{u,x_k,3}^i|=|{\cal{E}}_{{\cal{P}}1}^i|+|{\cal{E}}_{{\cal{P}}3}^i|=K_i+K_{i3}$, $|{\cal{E}}_{u,x_k,3}^i|+|{\cal{E}}_{u,x_k,4}^i|=|{\cal{E}}_{{\cal{P}}3}^i|+|{\cal{E}}_{{\cal{P}}4}^i|=K_{i3}+K_{i4}$ and $|{\cal{E}}_{u,x_k,1}^i|-|{\cal{E}}_{u,x_k,4}^i|=|{\cal{E}}_{{\cal{P}}1}^i|-|{\cal{E}}_{{\cal{P}}4}^i|=K_i-K_{i4}$.
\end{lemma}
\begin{proof}
By definition all the super nodes where the nodes in ${\cal{V}}_{x{\cal{P}}3}^i$, ${\cal{V}}_{x{\cal{P}}4}^i$, ${\cal{V}}_{y{\cal{P}}1}^i$ and ${\cal{V}}_{y{\cal{P}}3}^i$ belong to are labeled unexplored when MDFS returns, so during the running time of MDFS in iteration $K+1$, the nodes in ${\cal{V}}_{x{\cal{P}}3}^i$, ${\cal{V}}_{x{\cal{P}}4}^i$, ${\cal{V}}_{y{\cal{P}}1}^i$ and ${\cal{V}}_{y{\cal{P}}3}^i$ are never explored and ${\cal{V}}_{x{\cal{P}}3}^i\subseteq{\cal{V}}_{xu,x_k}^i$, ${\cal{V}}_{x{\cal{P}}4}^i\subseteq{\cal{V}}_{xu,x_k}^i$, ${\cal{V}}_{y{\cal{P}}1}^i\subseteq{\cal{V}}_{yu,x_k}^i$ and ${\cal{V}}_{y{\cal{P}}3}^i\subseteq{\cal{V}}_{yu,x_k}^i$ always hold for any $x_k$ with ${\cal{L}}(x_k)=i$ explored by MDFS. Any ${\cal{N}}(x)$ (or ${\cal{N}}(y)$) with transmitting node $x$ (or receiving node $y$) used by ${\cal{E}}_u^i$ but not by ${\cal{E}}_{{\cal{P}}}^i$ must be explored by MDFS, which means that for any $x_k$ with ${\cal{L}}(x_k)=i$ explored by MDFS ${\cal{V}}_{x{\cal{P}}34}^i={\cal{V}}_{x{\cal{P}}3}^i+{\cal{V}}_{x{\cal{P}}4}^i$ is the complete set of transmitting nodes in ${\cal{V}}_{xu,x_k}^i$ so that for each transmitting node $x$ in this set, ${\cal{N}}(x)$ is unexplored and ${\cal{V}}_{y{\cal{P}}13}^i={\cal{V}}_{y{\cal{P}}1}^i+{\cal{V}}_{y{\cal{P}}3}^i$ is the complete set of receiving nodes in ${\cal{V}}_{yu,x_k}^i$ so that for each receiving node $y$ in this set, ${\cal{N}}(y)$ is unexplored. So $|{\cal{E}}_{u,x_k,1}^i|+|{\cal{E}}_{u,x_k,3}^i|=|{\cal{E}}_{{\cal{P}}1}^i|+|{\cal{E}}_{{\cal{P}}3}^i|=K_i+K_{i3}$ and $|{\cal{E}}_{u,x_k,3}^i|+|{\cal{E}}_{u,x_k,4}^i|=|{\cal{E}}_{{\cal{P}}3}^i|+|{\cal{E}}_{{\cal{P}}4}^i|=K_{i3}+K_{i4}$. $|{\cal{E}}_{u,x_k,1}^i|-|{\cal{E}}_{u,x_k,4}^i|=|{\cal{E}}_{{\cal{P}}1}^i|-|{\cal{E}}_{{\cal{P}}4}^i|=K_i-K_{i4}$ is a straightforward result by subtracting one from the other.
\end{proof}

\begin{lemma}\label{lemm1}
rank$(T({\cal{V}}_{xu,x_k,1}^i+{\cal{V}}_{xu,x_k,2}^i,{\cal{V}}_{yu,x_k,1}^i+{\cal{V}}_{yu,x_k,3}^i))\geq|{\cal{V}}_{xu,x_k,1}^i|-|{\cal{V}}_{xu,x_k,4}^i|=K_i-K_{i4}$ for $x_k$ being explored by MDFS.
\end{lemma}
\begin{proof}
If $K_i-K_{i4}<0$, the statement is obviously true. Assume $K_i-K_{i4}\geq0$. We know rank$(T({\cal{E}}_{u,x_k}^i))=K=|{\cal{E}}_{u,x_k,1}^i|+|{\cal{E}}_{u,x_k,2}^i|+|{\cal{E}}_{u,x_k,3}^i|+|{\cal{E}}_{u,x_k,4}^i|=$ rank$(T({\cal{V}}_{xu,x_k,1}^i+{\cal{V}}_{xu,x_k,2}^i+{\cal{V}}_{xu,x_k,3}^i+{\cal{V}}_{xu,x_k,4}^i,{\cal{V}}_{yu,x_k,1}^i+{\cal{V}}_{yu,x_k,2}^i+{\cal{V}}_{yu,x_k,3}^i+{\cal{V}}_{yu,x_k,4}^i))$. If rank$(T({\cal{V}}_{xu,x_k,1}^i+{\cal{V}}_{xu,x_k,2}^i,{\cal{V}}_{yu,x_k,1}^i+{\cal{V}}_{yu,x_k,3}^i))<|{\cal{V}}_{xu,x_k,1}^i|-|{\cal{V}}_{xu,x_k,4}^i|$, we will have rank$(T({\cal{V}}_{xu,x_k,1}^i+{\cal{V}}_{xu,x_k,2}^i+{\cal{V}}_{xu,x_k,3}^i+{\cal{V}}_{xu,x_k,4}^i,{\cal{V}}_{yu,x_k,1}^i+{\cal{V}}_{yu,x_k,2}^i+{\cal{V}}_{yu,x_k,3}^i+{\cal{V}}_{yu,x_k,4}^i))<|{\cal{E}}_{u,x_k,1}^i|+|{\cal{E}}_{u,x_k,2}^i|+|{\cal{E}}_{u,x_k,3}^i|+|{\cal{E}}_{u,x_k,4}^i|=K$.
\end{proof}

Let $x_k$ be some transmitting node explored by MDFS in iteration $K+1$ of our algorithm. Let $y'\in{\cal{V}}_{yu,x_k,2}^i+{\cal{V}}_{yu,x_k,4}^i$. By definition, ${\cal{N}}(y')$ is explored by MDFS sometime. If $y'\in{\cal{V}}_{y{\cal{P}}}^i$, then $y'$ must have been deleted from ${\cal{V}}_{yu}^i$ in a type $3$ forwarding move of MDFS sometime in iteration $K+1$ of our algorithm given that ${\cal{N}}(y')$ is explored and have already been added back to ${\cal{V}}_{yu,x_k}^i$ if this type $3$ forwarding move on $y'$ happens before the current exploration of $x_k$ otherwise it won't appear in ${\cal{V}}_{yu,x_k}^i$. If $y'\not\in{\cal{V}}_{y{\cal{P}}}^i$, then $y'$ must have been added to ${\cal{V}}_{yu,x_k}^i$ in a type $1$ forwarding move by MDFS before the current exploration to $x_k$ otherwise $y'$ won't appear in ${\cal{V}}_{yu,x_k}^i$ when $x_k$ is explored. It means for each $y'\in{\cal{V}}_{yu,x_k,2}^i+{\cal{V}}_{yu,x_k,4}^i$, $y'$ is either added to ${\cal{V}}_{yu,x_k}^i$ before the current exploration of $x_k$ or is deleted from ${\cal{V}}_{yu,x_k}^i$ after the current exploration of $x_k$.

\begin{lemma}\label{lemm3}
If rank$(T({\cal{V}}_{xu,x_k,1}^i+{\cal{V}}_{xu,x_k,2}^i,{\cal{V}}_{yu,x_k,1}^i+{\cal{V}}_{yu,x_k,3}^i))>|{\cal{V}}_{xu,x_k,1}^i|-|{\cal{V}}_{xu,x_k,4}^i|=K_i-K_{i4}$ for $x_k$ being explored by MDFS, then there exists some nonempty set ${\cal{V}}_{y24,x_k}\subseteq{\cal{V}}_{yu,x_k,2}^i+{\cal{V}}_{yu,x_k,4}^i$, such that for any $y''\in{\cal{V}}_{y24,x_k}$, $T({\cal{V}}_{xu,x_k,1}^i+{\cal{V}}_{xu,x_k,2}^i,y'')=\sum_{y_i\in{\cal{V}}_{y}^{''}}T({\cal{V}}_{xu,x_k,1}^i+{\cal{V}}_{xu,x_k,2}^i,y_i)$ for some ${\cal{V}}_{y}^{''}={\cal{V}}_{y24,x_k}-y''+{\cal{V}}_{y1}^{''}$ with ${\cal{V}}_{y1}^{''}\subseteq{\cal{V}}_{yu,x_k,1}^i+{\cal{V}}_{yu,x_k,3}^i$. Let ${\cal{E}}_{u24,x_k}\subseteq{\cal{E}}_{u,x_k,2}^i+{\cal{E}}_{u,x_k,4}^i$ be the edge set with ${\cal{V}}_{y24,x_k}$ being their receiving nodes.
\end{lemma}
\begin{proof}
We know that rank$(T({\cal{V}}_{xu,x_k,1}^i+{\cal{V}}_{xu,x_k,2}^i,{\cal{V}}_{yu,x_k,1}^i+{\cal{V}}_{yu,x_k,2}^i+{\cal{V}}_{yu,x_k,3}^i+{\cal{V}}_{yu,x_k,4}^i))=|{\cal{V}}_{xu,x_k,1}^i|+|{\cal{V}}_{xu,x_k,2}^i|$. If rank$(T({\cal{V}}_{xu,x_k,1}^i+{\cal{V}}_{xu,x_k,2}^i,{\cal{V}}_{yu,x_k,1}^i+{\cal{V}}_{yu,x_k,3}^i))>|{\cal{V}}_{xu,x_k,1}^i|-|{\cal{V}}_{xu,x_k,4}^i|$, there must exist some $y'\in{\cal{V}}_{yu,x_k,2}^i+{\cal{V}}_{yu,x_k,4}^i$, such that $T({\cal{V}}_{xu,x_k,1}^i+{\cal{V}}_{xu,x_k,2}^i,y')=\sum_{y_i\in{\cal{V}}_{y}^{'}}T({\cal{V}}_{xu,x_k,1}^i+{\cal{V}}_{xu,x_k,2}^i,y_i)$ for some ${\cal{V}}_{y}^{'}\subseteq{\cal{V}}_{yu,x_k,1}^i+{\cal{V}}_{yu,x_k,2}^i+{\cal{V}}_{yu,x_k,3}^i+{\cal{V}}_{yu,x_k,4}^i-y'$. Let ${\cal{V}}_{y24,x_k}=\{y'+{\cal{V}}_{y}^{'}\}\cap\{{\cal{V}}_{yu,x_k,2}^i+{\cal{V}}_{yu,x_k,4}^i\}$. Obviously ${\cal{V}}_{y24,x_k}\neq\emptyset$ and $\sum_{y_i\in{\cal{V}}_{y24,x_k}}T({\cal{V}}_{xu,x_k,1}^i+{\cal{V}}_{xu,x_k,2}^i,y_i)=\sum_{y_i\in{\cal{V}}_{y}^{'}-{\cal{V}}_{y24,x_k}}T({\cal{V}}_{xu,x_k,1}^i+{\cal{V}}_{xu,x_k,2}^i,y_i)$. For any $y''\in{\cal{V}}_{y24,x_k}$, we have $T({\cal{V}}_{xu,x_k,1}^i+{\cal{V}}_{xu,x_k,2}^i,y'')=\sum_{y_i\in{\cal{V}}_{y24,x_k}-y''}T({\cal{V}}_{xu,x_k,1}^i+{\cal{V}}_{xu,x_k,2}^i,y_i)+\sum_{y_i\in{\cal{V}}_{y1}^{''}}T({\cal{V}}_{xu,x_k,1}^i+{\cal{V}}_{xu,x_k,2}^i,y_i)=\sum_{y_i\in{\cal{V}}_{y}^{''}}T({\cal{V}}_{xu,x_k,1}^i+{\cal{V}}_{xu,x_k,2}^i,y_i)$ with ${\cal{V}}_{y1}^{''}={\cal{V}}_{y}^{'}-{\cal{V}}_{y24,x_k}\subseteq{\cal{V}}_{yu,x_k,1}^i+{\cal{V}}_{yu,x_k,3}^i$ and ${\cal{V}}_{y}^{''}={\cal{V}}_{y24,x_k}-y''+{\cal{V}}_{y1}^{''}$.
\end{proof}

Assume rank$(T({\cal{V}}_{xu,x_k,1}^i+{\cal{V}}_{xu,x_k,2}^i,{\cal{V}}_{yu,x_k,1}^i+{\cal{V}}_{yu,x_k,3}^i))>|{\cal{V}}_{xu,x_k,1}^i|-|{\cal{V}}_{xu,x_k,4}^i|=K_i-K_{i4}$. Let $y''\in{\cal{V}}_{y24,x_k}$ be the last one in ${\cal{V}}_{y24,x_k}$ being added to the set ${\cal{V}}_{yu,x_k}^i$ before the current exploration of $x_k$ or the first one in ${\cal{V}}_{y24,x_k}$ being deleted from the set ${\cal{V}}_{yu,x_k}^i$ after the current exploration of $x_k$. By definition,
\begin{eqnarray}\label{eqn31}
T({\cal{V}}_{xu,x_k,1}^i+{\cal{V}}_{xu,x_k,2}^i,y'')=\sum_{y_i\in{\cal{V}}_{y}^{''}}T({\cal{V}}_{xu,x_k,1}^i+{\cal{V}}_{xu,x_k,2}^i,y_i)
\end{eqnarray}
for some ${\cal{V}}_{y}^{''}={\cal{V}}_{y24,x_k}-y''+{\cal{V}}_{y1}^{''}\subseteq{\cal{V}}_{yu,x_k}^i-y''$ with ${\cal{V}}_{y1}^{''}\subseteq{\cal{V}}_{yu,x_k,1}^i+{\cal{V}}_{yu,x_k,3}^i$. Let $x''$ be the corresponding transmitting node such that $(x'',y'')\in{\cal{E}}_{u24,x_k}$.

\begin{lemma}\label{lemm4}
Assume rank$(T({\cal{V}}_{xu,x_k,1}^i+{\cal{V}}_{xu,x_k,2}^i,{\cal{V}}_{yu,x_k,1}^i+{\cal{V}}_{yu,x_k,3}^i))>|{\cal{V}}_{xu,x_k,1}^i|-|{\cal{V}}_{xu,x_k,4}^i|=K_i-K_{i4}$ and let $(x'',y'')\in{\cal{E}}_{u24,x_k}$ be defined above. Then just after adding $y''$ or just before deleting $y''$, we have
\begin{eqnarray}\label{eqn32}
T({\cal{V}}_{xu,x'',1}^i+{\cal{V}}_{xu,x'',2}^i+x'',y'')=\sum_{y_i\in{\cal{V}}_{y}^{''}}T({\cal{V}}_{xu,x'',1}^i+{\cal{V}}_{xu,x'',2}^i+x'',y_i)
\end{eqnarray}
for the same ${\cal{V}}_{y}^{''}$ as in Equation (\ref{eqn31}). And when $x''$ is explored just before adding $y''$ along edge $(x'',y'')$ or just after deleting $y''$ along edge $(x'',y'')$, we have
rank$(T({\cal{V}}_{xu,x'',1}^i+{\cal{V}}_{xu,x'',2}^i+x'',{\cal{V}}_{yu,x''}^i))=|{\cal{V}}_{xu,x'',1}^i|+|{\cal{V}}_{xu,x'',2}^i|+1$.
\end{lemma}
\begin{proof}
Note that in equation (\ref{eqn32}), since $y''\in{\cal{V}}_{y24,x_k}$ with $(x'',y'')\in{\cal{E}}_{u,x_k}^i$ is the last one in the set ${\cal{V}}_{y24,x_k}$ being added to the set ${\cal{V}}_{yu,x_k}^i$ before the current exploration of $x_k$ or the first one in the set ${\cal{V}}_{y24,x_k}$ being deleted from the set ${\cal{V}}_{yu,x_k}^i$ after the current exploration of $x_k$, ${\cal{V}}_{y24,x_k}-y''$ is not changed and is also subset of ${\cal{V}}_{yu,x''}^i$. So in equation (\ref{eqn32}), ${\cal{V}}_y^{''}$ is the same as in (\ref{eqn31}) given that ${\cal{V}}_{y24,x_k}-y''$ is not changed and ${\cal{V}}_{y1}^{''}\subseteq{\cal{V}}_{yu,x_i,1}^i+{\cal{V}}_{yu,x_i,3}^i={\cal{V}}_{y{\cal{P}}1}^i+{\cal{V}}_{y{\cal{P}}3}^i$ is not changed, but the set ${\cal{V}}_{xu,x_k,1}^i+{\cal{V}}_{xu,x_k,2}^i$ are changing to ${\cal{V}}_{xu,x'',1}^i+{\cal{V}}_{xu,x'',2}^i$. When MDFS proceeds from the point of just after adding $(x'',y'')$ to ${\cal{E}}_u$ to the point of exploring $x_k$ or from the point of exploring $x_k$ to the point of just before deleting $(x'',y'')$ from ${\cal{E}}_u$, only those three forwarding moves in definition of MDFS are allowed. It is sufficient for us to show that any forwarding moves of MDFS or backtracking of these moves doesn't change the relationship in equation (\ref{eqn31}) so that equation (\ref{eqn32}) holds.

Let's first consider forwarding moves of MDFS. A type $1$ forwarding move along edge $(x,y)$ would change ${\cal{V}}_{xu,x}^i$ to ${\cal{V}}_{xu,x}^i+x$. Since ${\cal{V}}_{xspan}^{x}\subseteq{\cal{V}}_{xu,x,1}^i+{\cal{V}}_{xu,x,2}^i$ (based on Lemma \ref{lemm2}), the vector $T(x,{\cal{V}}_{yu,x}^i)$ is a linear combination of row vectors in $T({\cal{V}}_{xu,x,1}^i+{\cal{V}}_{xu,x,2}^i,{\cal{V}}_{yu,x}^i)$, so the relationship in (\ref{eqn31}) still holds while ${\cal{V}}_{xu,x,1}^i+{\cal{V}}_{xu,x,2}^i$ changes to ${\cal{V}}_{xu,x,1}^i+{\cal{V}}_{xu,x,2}^i+x$ in a type $1$ forwarding move. In a type $2$ forwarding move, ${\cal{V}}_{xu,x,1}^i+{\cal{V}}_{xu,x,2}^i$ changes to ${\cal{V}}_{xu,x,1}^i+{\cal{V}}_{xu,x,2}^i+x-x_i$ for some $x_i\in{\cal{V}}_{xspan}^{x}$. Again we have ${\cal{V}}_{xspan}^{x}\subseteq{\cal{V}}_{xu,x,1}^i+{\cal{V}}_{xu,x,2}^i$ (based on Lemma \ref{lemm2}). It is easy to see that the relationship in (\ref{eqn31}) still holds when ${\cal{V}}_{xu,x,1}^i+{\cal{V}}_{xu,x,2}^i$ changes to ${\cal{V}}_{xu,x,1}^i+{\cal{V}}_{xu,x,2}^i+x-x_i$. In a type $3$ forwarding move, some $y\in{\cal{V}}_{yu,x}^i\char92 {\cal{V}}_{y24,x_k}$ is deleted from ${\cal{V}}_{yu,x}^i$ which obviously doesn't affect the relationship in (\ref{eqn31}). Now let's consider backtracking moves of MDFS. Let equation (\ref{eqn31}) hold after a type $3$ forwarding move of MDFS along an edge $(x,y)$ for some $y\not\in{\cal{V}}_{y24,x_k}$. After the type $3$ forwarding move along $(x,y)$, MDFS will explore $x$ when it explores ${\cal{N}}(x)$. Again we have ${\cal{V}}_{xspan}^{x}\subseteq{\cal{V}}_{xu,x,1}^i+{\cal{V}}_{xu,x,2}^i$, so equation (\ref{eqn31}) should hold before the type $3$ forwarding move when ${\cal{V}}_{xu,x,1}^i+{\cal{V}}_{xu,x,2}^i$ was ${\cal{V}}_{xu,x,1}^i+{\cal{V}}_{xu,x,2}^i+x$. A proceeding type $2$ forwarding move before the current exploration of $x$ means ${\cal{V}}_{xu,x,1}^i+{\cal{V}}_{xu,x,2}^i$ was ${\cal{V}}_{xu,x,1}^i+{\cal{V}}_{xu,x,2}^i+x-x'$ with $x\in{\cal{V}}_{xspan}^{x'}$ before the move. Let equation (\ref{eqn31}) hold after a type $2$ forwarding move of MDFS along a path $P_{x'\rightarrow x}$ with ${\cal{V}}_{xu,x,1}^i+{\cal{V}}_{xu,x,2}^i$. Again we have ${\cal{V}}_{xspan}^{x}\subseteq{\cal{V}}_{xu,x,1}^i+{\cal{V}}_{xu,x,2}^i$ (based on Lemma \ref{lemm2}), so equation (\ref{eqn31}) holds with addition of row for $x$, which means that equation (\ref{eqn31}) holds before the type $2$ forwarding move when ${\cal{V}}_{xu,x,1}^i+{\cal{V}}_{xu,x,2}^i$ was ${\cal{V}}_{xu,x,1}^i+{\cal{V}}_{xu,x,2}^i+x-x'$. If the proceeding move is a type $1$ forwarding move, it means that ${\cal{V}}_{xu,x,1}^i+{\cal{V}}_{xu,x,2}^i$ was ${\cal{V}}_{xu,x,1}^i+{\cal{V}}_{xu,x,2}^i-x$, it is obvious that equation (\ref{eqn31}) holds before the type $1$ forwarding move with ${\cal{V}}_{xu,x,1}^i+{\cal{V}}_{xu,x,2}^i-x$ if it holds after the move with ${\cal{V}}_{xu,x,1}^i+{\cal{V}}_{xu,x,2}^i$. From above discussion, we conclude that equation (\ref{eqn32}) holds.

We know that after adding $(x'',y'')$ or before deleting $(x'',y'')$, rank$(T({\cal{V}}_{xu,x'',1}^i+{\cal{V}}_{xu,x'',2}^i+x'',{\cal{V}}_{yu,x''}^i+y''))=|{\cal{V}}_{xu,x'',1}^i|+|{\cal{V}}_{xu,x'',2}^i|+1$. Given that equation (\ref{eqn32}) holds, we have rank$(T({\cal{V}}_{xu,x'',1}^i+{\cal{V}}_{xu,x'',2}^i+x'',{\cal{V}}_{yu,x''}^i))=|{\cal{V}}_{xu,x'',1}^i|+|{\cal{V}}_{xu,x'',2}^i|+1$.
\end{proof}

\begin{lemma}\label{lem20}
For any $x_k$ that has been explored by MDFS in iteration $K+1$ of our algorithm, rank$(T({\cal{V}}_{xu,x_k,1}^i+{\cal{V}}_{xu,x_k,2}^i,{\cal{V}}_{yu,x_k,1}^i+{\cal{V}}_{yu,x_k,3}^i))=|{\cal{V}}_{xu,x_k,1}^i|-|{\cal{V}}_{xu,x_k,4}^i|=K_i-K_{i4}\geq0$.
\end{lemma}
\begin{proof}
We first prove $K_i-K_{i4}\geq0$. Assume $K_i-K_{i4}<0$. Then we must have rank$(T({\cal{V}}_{xu,x_k,1}^i+{\cal{V}}_{xu,x_k,2}^i,{\cal{V}}_{yu,x_k,1}^i+{\cal{V}}_{yu,x_k,3}^i))>|{\cal{V}}_{xu,x_k,1}^i|-|{\cal{V}}_{xu,x_k,4}^i|=K_i-K_{i4}$. Based on Lemma \ref{lemm3} and \ref{lemm4}, if rank$(T({\cal{V}}_{xu,x_k,1}^i+{\cal{V}}_{xu,x_k,2}^i,{\cal{V}}_{yu,x_k,1}^i+{\cal{V}}_{yu,x_k,3}^i))>|{\cal{V}}_{xu,x_k,1}^i|-|{\cal{V}}_{xu,x_k,4}^i|=K_i-K_{i4}$, then we will have rank$(T({\cal{V}}_{xu,x'',1}^i+{\cal{V}}_{xu,x'',2}^i+x'',{\cal{V}}_{yu,x''}^i))=|{\cal{V}}_{xu,x'',1}^i|+|{\cal{V}}_{xu,x'',2}^i|+1$ when $x''$ is explored just before adding $y''$ along edge $(x'',y'')$ or just after deleting $y''$ along edge $(x'',y'')$ for $x''$, $y''$ defined as in Lemma \ref{lemm4}. But based on Lemma \ref{lemm2}, we should have rank$(T({\cal{V}}_{xu,x'',1}^i+{\cal{V}}_{xu,x'',2}^i+x'',{\cal{V}}_{yu,x''}^i))=|{\cal{V}}_{xu,x'',1}^i|+|{\cal{V}}_{xu,x'',2}^i|$, which constitutes a contradiction. So we must have $K_i-K_{i4}\geq0$.

Assume rank$(T({\cal{V}}_{xu,x_k,1}^i+{\cal{V}}_{xu,x_k,2}^i,{\cal{V}}_{yu,x_k,1}^i+{\cal{V}}_{yu,x_k,3}^i))>|{\cal{V}}_{xu,x_k,1}^i|-|{\cal{V}}_{xu,x_k,4}^i|=K_i-K_{i4}\geq0$. Using a similar argument as above, we would arrive at a contradiction which means the assumption doesn't hold. So we must have rank$(T({\cal{V}}_{xu,x_k,1}^i+{\cal{V}}_{xu,x_k,2}^i,{\cal{V}}_{yu,x_k,1}^i+{\cal{V}}_{yu,x_k,3}^i))\leq|{\cal{V}}_{xu,x_k,1}^i|-|{\cal{V}}_{xu,x_k,4}^i|=K_i-K_{i4}$. Now together with Lemma \ref{lemm1}, we conclude that rank$(T({\cal{V}}_{xu,x_k,1}^i+{\cal{V}}_{xu,x_k,2}^i,{\cal{V}}_{yu,x_k,1}^i+{\cal{V}}_{yu,x_k,3}^i))=|{\cal{V}}_{xu,x_k,1}^i|-|{\cal{V}}_{xu,x_k,4}^i|=K_i-K_{i4}$.
\end{proof}

\begin{lemma}\label{lemm5}
For any $x_k$ with ${\cal{L}}(x_k)=i$ that has been explored by MDFS in iteration $K+1$ of our algorithm and any $y_j\in{\cal{V}}_{y\Omega_K3}^i$, rank$(T({\cal{V}}_{xu,x_k,1}^i+{\cal{V}}_{xu,x_k,2}^i+x_k,{\cal{V}}_{yu,x_k,1}^i+{\cal{V}}_{yu,x_k,3}^i+y_j))=$rank$(T({\cal{V}}_{xu,x_k,1}^i+{\cal{V}}_{xu,x_k,2}^i+x_k,{\cal{V}}_{yu,x_k,1}^i+{\cal{V}}_{yu,x_k,3}^i))=|{\cal{V}}_{xu,x_k,1}^i|-|{\cal{V}}_{xu,x_k,4}^i|=K_i-K_{i4}$.
\end{lemma}
\begin{proof}
First we prove rank$(T({\cal{V}}_{xu,x_k,1}^i+{\cal{V}}_{xu,x_k,2}^i+x_k,{\cal{V}}_{yu,x_k,1}^i+{\cal{V}}_{yu,x_k,3}^i))=K_i-K_{i4}$. Based on Lemma \ref{lemm2}, ${\cal{V}}_{xspan}^{x_k}\subseteq{\cal{V}}_{xu,x_k,1}^i+{\cal{V}}_{xu,x_k,2}^i$, so rank$(T({\cal{V}}_{xu,x_k,1}^i+{\cal{V}}_{xu,x_k,2}^i+x_k,{\cal{V}}_{yu,x_k,1}^i+{\cal{V}}_{yu,x_k,3}^i))=$rank$(T({\cal{V}}_{xu,x_k,1}^i+{\cal{V}}_{xu,x_k,2}^i,{\cal{V}}_{yu,x_k,1}^i+{\cal{V}}_{yu,x_k,3}^i))=|{\cal{V}}_{xu,x_k,1}^i|-|{\cal{V}}_{xu,x_k,4}^i|=K_i-K_{i4}$ based on Lemma \ref{lem20}.

Second we prove rank$(T({\cal{V}}_{xu,x_k,1}^i+{\cal{V}}_{xu,x_k,2}^i+x_k,{\cal{V}}_{yu,x_k,1}^i+{\cal{V}}_{yu,x_k,3}^i+y_j))=|{\cal{V}}_{xu,x_k,1}^i|-|{\cal{V}}_{xu,x_k,4}^i|=K_i-K_{i4}$. Given rank$(T({\cal{V}}_{xu,x_k,1}^i+{\cal{V}}_{xu,x_k,2}^i+x_k,{\cal{V}}_{yu,x_k,1}^i+{\cal{V}}_{yu,x_k,3}^i))=|{\cal{V}}_{xu,x_k,1}^i|-|{\cal{V}}_{xu,x_k,4}^i|=K_i-K_{i4}$, rank$(T({\cal{V}}_{xu,x_k,1}^i+{\cal{V}}_{xu,x_k,2}^i+x_k,{\cal{V}}_{yu,x_k,1}^i+{\cal{V}}_{yu,x_k,3}^i+y_j))$ equals either $|{\cal{V}}_{xu,x_k,1}^i|-|{\cal{V}}_{xu,x_k,4}^i|$ or $|{\cal{V}}_{xu,x_k,1}^i|-|{\cal{V}}_{xu,x_k,4}^i|+1$. Assume
\begin{eqnarray}
{\mbox{rank}}(T({\cal{V}}_{xu,x_k,1}^i+{\cal{V}}_{xu,x_k,2}^i+x_k,{\cal{V}}_{yu,x_k,1}^i+{\cal{V}}_{yu,x_k,3}^i+y_j))=|{\cal{V}}_{xu,x_k,1}^i|-|{\cal{V}}_{xu,x_k,4}^i|+1\label{eqn33}
\end{eqnarray}
From Lemma \ref{lemm2}, we have
\begin{eqnarray}
&&{\mbox{rank}}(T({\cal{V}}_{xu,x_k,1}^i+{\cal{V}}_{xu,x_k,2}^i+x_k,{\cal{V}}_{yu,x_k}^i+y_j))\\
&=&{\mbox{rank}}(T({\cal{V}}_{xu,x_k,1}^i+{\cal{V}}_{xu,x_k,2}^i,{\cal{V}}_{yu,x_k}^i+y_j))=|{\cal{V}}_{xu,x_k,1}^i|+|{\cal{V}}_{xu,x_k,2}^i|\label{eqn37}
\end{eqnarray}
and
\begin{eqnarray}
&&T(x_k,{\cal{V}}_{yu,x_k}^i+y_j)\\
&=&\sum_{x'\in{\cal{V}}_{xspan}^{x_k}\subseteq{\cal{V}}_{xu,x_k,1}^i+{\cal{V}}_{xu,x_k,2}^i}T(x',{\cal{V}}_{yu,x_k}^i+y_j)\label{eqn35}
\end{eqnarray}
From (\ref{eqn33}) and (\ref{eqn35}), we have
\begin{eqnarray}
{\mbox{rank}}(T({\cal{V}}_{xu,x_k,1}^i+{\cal{V}}_{xu,x_k,2}^i,{\cal{V}}_{yu,x_k,1}^i+{\cal{V}}_{yu,x_k,3}^i+y_j))=|{\cal{V}}_{xu,x_k,1}^i|-|{\cal{V}}_{xu,x_k,4}^i|+1\label{eqn36}
\end{eqnarray}
From (\ref{eqn37}) and (\ref{eqn36}), we have
\begin{eqnarray}
&&T({\cal{V}}_{xu,x_k,1}^i+{\cal{V}}_{xu,x_k,2}^i,y_j)\\
&=&\sum_{y_{13}\in{\cal{V}}_{y13,x_k}}T({\cal{V}}_{xu,x_k,1}^i+{\cal{V}}_{xu,x_k,2}^i,y_{13})+\sum_{y_{24}\in{\cal{V}}_{y24,x_k}}T({\cal{V}}_{xu,x_k,1}^i+{\cal{V}}_{xu,x_k,2}^i,y_{24})\label{eqn38}
\end{eqnarray}
for some ${\cal{V}}_{y13,x_k}\subseteq{\cal{V}}_{yu,x_k,1}^i+{\cal{V}}_{yu,x_k,3}^i$ and some nonempty ${\cal{V}}_{y24,x_k}\subseteq{\cal{V}}_{yu,x_k,2}^i+{\cal{V}}_{yu,x_k,4}^i$.
From (\ref{eqn35}) and (\ref{eqn38}), we have
\begin{eqnarray}
&&T({\cal{V}}_{xu,x_k,1}^i+{\cal{V}}_{xu,x_k,2}^i+x_k,y_j)\\
&=&\sum_{y_{13}\in{\cal{V}}_{y13,x_k}}T({\cal{V}}_{xu,x_k,1}^i+{\cal{V}}_{xu,x_k,2}^i+x_k,y_{13})+\sum_{y_{24}\in{\cal{V}}_{y24,x_k}}T({\cal{V}}_{xu,x_k,1}^i+{\cal{V}}_{xu,x_k,2}^i+x_k,y_{24})
\end{eqnarray}
for the same ${\cal{V}}_{y13,x_k}$ and ${\cal{V}}_{y24,x_k}$ in (\ref{eqn38}).

Let $y''\in{\cal{V}}_{y24,x_k}$ be the last one in ${\cal{V}}_{y24,x_k}$ being added to the set ${\cal{V}}_{yu,x_k}^i$ before the current exploration of $x_k$ or the first one in ${\cal{V}}_{y24,x_k}$ being deleted from the set ${\cal{V}}_{yu,x_k}^i$ after the current exploration of $x_k$ and $(x'',y'')\in{\cal{E}}_{u,x_k}^i$. Then using a similar argument as that in Lemma \ref{lemm4}, we have
rank$(T({\cal{V}}_{xu,x'',1}^i+{\cal{V}}_{xu,x'',2}^i+x'',{\cal{V}}_{yu,x''}^i+y_j))=|{\cal{V}}_{xu,x'',1}^i|+|{\cal{V}}_{xu,x'',2}^i|+1$ when $x''$ is explored just before adding $y''$ along edge $(x'',y'')$ or just after deleting $y''$ along edge $(x'',y'')$, but it's a contradiction with Lemma \ref{lemm2}. So it must be rank$(T({\cal{V}}_{xu,x_k,1}^i+{\cal{V}}_{xu,x_k,2}^i+x_k,{\cal{V}}_{yu,x_k,1}^i+{\cal{V}}_{yu,x_k,3}^i+y_j))=|{\cal{V}}_{xu,x_k,1}^i|-|{\cal{V}}_{xu,x_k,4}^i|=K_i-K_{i4}$. \end{proof}

\begin{lemma}\label{lemm6}
rank$(T_{\Omega_K}^i)=$rank$(T({\cal{V}}_{x\Omega_K}^i,{\cal{V}}_{y\Omega_K}^i))=K_i-K_{i4},1\leq i\leq L-1$.
\end{lemma}
\begin{proof}
If ${\cal{E}}_{\Omega_K}^i=\emptyset$, i.e., the cut $\Omega_K$ and layer cut $i$ has no intersection, then rank$(T_{\Omega_K}^i)=$ rank$(T({\cal{V}}_{x\Omega_K}^i,{\cal{V}}_{y\Omega_K}^i))=K_i-K_{i4}=0$ holds. Next assume that ${\cal{E}}_{\Omega_K}^i\neq\emptyset$. Lemma \ref{lemm5} says for any $x_k$ with ${\cal{L}}(x_k)=i$ that has been explored by MDFS in iteration $K+1$ of our algorithm and any $y_j\in{\cal{V}}_{y\Omega_K3}^i$, rank$(T({\cal{V}}_{xu,x_k,1}^i+{\cal{V}}_{xu,x_k,2}^i+x_k,{\cal{V}}_{yu,x_k,1}^i+{\cal{V}}_{yu,x_k,3}^i+y_j))=K_i-K_{i4}$ and rank$(T({\cal{V}}_{xu,x_k,1}^i+{\cal{V}}_{xu,x_k,2}^i+x_k,{\cal{V}}_{yu,x_k,1}^i+{\cal{V}}_{yu,x_k,3}^i))=K_i-K_{i4}$. Lemma \ref{lemm2} says for any $y_j\in{\cal{V}}_{y\Omega_K3}^i$, $T(x_k,{\cal{V}}_{yu,x_k}^i+y_j)=\sum_{x'\in{\cal{V}}_{xspan}^{x_k}\subseteq{\cal{V}}_{xu,x_k,1}^i+{\cal{V}}_{xu,x_k,2}^i}T(x',{\cal{V}}_{yu,x_k}^i+y_j)$. Based on these two Lemmas, it's easy to conclude that
\begin{eqnarray}
&&{\mbox{rank}}(T({\cal{V}}_{xu,x_k,1}^i+{\cal{V}}_{xu,x_k,2}^i+x_k,{\cal{V}}_{yu,x_k,1}^i+{\cal{V}}_{yu,x_k,3}^i+{\cal{V}}_{y\Omega_K3}^i))\\ &=&{\mbox{rank}}(T({\cal{V}}_{xu,x_k,1}^i+{\cal{V}}_{xu,x_k,2}^i+x_k,{\cal{V}}_{y\Omega_K}^i))=K_i-K_{i4}\label{eqn39}\\
&=&{\mbox{rank}}(T({\cal{V}}_{xu,x_k,1}^i+{\cal{V}}_{xu,x_k,2}^i,{\cal{V}}_{y\Omega_K}^i))=K_i-K_{i4}\label{eqn41}
\end{eqnarray}
and
\begin{eqnarray}
T(x_k,{\cal{V}}_{y\Omega_K}^i)=\sum_{x'\in{\cal{V}}_{xspan}^{x_k}\subseteq{\cal{V}}_{xu,x_k,1}^i+{\cal{V}}_{xu,x_k,2}^i}T(x',{\cal{V}}_{y\Omega_K}^i)\label{eqn40} \end{eqnarray}
Equations (\ref{eqn39}) (\ref{eqn41}) and (\ref{eqn40}) hold for any $x_k$ that has been explored by MDFS in iteration $K+1$ of our algorithm.

Let $x^q,1\leq q\leq Q$ be the $q$th transmitting nodes in layer $i$ that has been explored by MDFS in iteration $K+1$ in our algorithm. Note that since some transmitting nodes may be explored more than once, $x^q$ may not be distinct but $Q$ is finite.
We claim that
\begin{eqnarray}
{\mbox{rank}}(T({\cal{V}}_{x{\cal{P}}1}^i+{\cal{V}}_{x{\cal{P}}2}^i+\sum_{k=1}^{q}x^k,{\cal{V}}_{y\Omega_K}^i))={\mbox{rank}}(T({\cal{V}}_{x{\cal{P}}1}^i+{\cal{V}}_{x{\cal{P}}2}^i,{\cal{V}}_{y\Omega_K}^i))=K_i-K_{i4}\label{eqn46}
\end{eqnarray}
for $1\leq q\leq Q$. When $x^1$ is explored, ${\cal{V}}_{xu,x^1,1}^i={\cal{V}}_{x{\cal{P}}1}^i$ and ${\cal{V}}_{xu,x^1,2}^i={\cal{V}}_{x{\cal{P}}2}^i$. From (\ref{eqn39}) (\ref{eqn41}), we have
\begin{eqnarray}
{\mbox{rank}}(T({\cal{V}}_{x{\cal{P}}1}^i+{\cal{V}}_{x{\cal{P}}2}^i+x^1,{\cal{V}}_{y\Omega_K}^i))={\mbox{rank}}(T({\cal{V}}_{x{\cal{P}}1}^i+{\cal{V}}_{x{\cal{P}}2}^i,{\cal{V}}_{y\Omega_K}^i))=K_i-K_{i4}\label{eqn42}
\end{eqnarray}
When $x^2$ is explored,
\begin{eqnarray}
{\cal{V}}_{xu,x^2,1}^i+{\cal{V}}_{xu,x^2,2}^i\subseteq{\cal{V}}_{xu,x^1,1}^i+{\cal{V}}_{xu,x^1,2}^i+x^1\label{eqn45}
\end{eqnarray}
and from (\ref{eqn39}) (\ref{eqn41})
\begin{eqnarray}
{\mbox{rank}}(T({\cal{V}}_{xu,x^2,1}^i+{\cal{V}}_{xu,x^2,2}^i+x^2,{\cal{V}}_{y\Omega_K}^i))={\mbox{rank}}(T({\cal{V}}_{xu,x^2,1}^i+{\cal{V}}_{xu,x^2,2}^i,{\cal{V}}_{y\Omega_K}^i))=K_i-K_{i4}\label{eqn43}
\end{eqnarray}
From (\ref{eqn42}) to (\ref{eqn43}), we conclude that rank$(T({\cal{V}}_{x{\cal{P}}1}^i+{\cal{V}}_{x{\cal{P}}2}^i+x^1+x^2,{\cal{V}}_{y\Omega_K}^i))=K_i-K_{i4}$. Use induction on $x^q,1\leq q\leq Q$, just as we did for $x^2$, we conclude that (\ref{eqn46}) holds for any $q$,$1\leq q\leq Q$. We know that ${\cal{V}}_{x{\cal{P}}1}^i+{\cal{V}}_{x{\cal{P}}2}^i+\sum_{k=1}^{Q}x^k={\cal{V}}_{x\Omega_K}^i$, so when $q=Q$ equation (\ref{eqn46}) means that rank$(T_{\Omega_K}^i)=$rank$(T({\cal{V}}_{x\Omega_K}^i,{\cal{V}}_{y\Omega_K}^i))=K_i-K_{i4},1\leq i\leq L-1$.
\end{proof}

\begin{lemma}\label{lemm7}
\begin{eqnarray}
\sum_{i=1}^{L-1}K_i-\sum_{i=1}^{L-1}K_{i4}=K
\end{eqnarray}
\end{lemma}
\begin{proof}
By definition, $\sum_{i=1}^{L-1}K_i$ is the total times the paths in ${\cal{P}}$ cross the cut $\Omega_K$. Since each path in ${\cal{P}}$ crosses $\Omega_K$ at least once and it's possible that some of these paths may cross $\Omega_K$ more than once, we have $\sum_{i=1}^{L-1}K_i\geq K$. For $\forall P\in{\cal{P}}$, let $k_{p}$ be the times $P$ goes from $\Omega_K$ to $\Omega_K^c$ and $k_{P}^{'}$ be the times $P$ goes from $\Omega_K^c$ to $\Omega_K$. Given $S\in\Omega_K$ and $D\in\Omega_K^c$, it must be true that $k_{P}-k_{P}^{'}=1$ and $\sum_{P\in{\cal{P}}}(k_{P}-k_{P}^{'})=|{\cal{P}}|=K$. By definition, the times that $\forall P\in{\cal{P}}$ goes from $\Omega_K$ to $\Omega_K^c$ is counted in $\sum_{i=1}^{L-1}K_i$ and the times that $\forall P\in{\cal{P}}$ goes from $\Omega_K^c$ to $\Omega_K$ is counted in $\sum_{i=1}^{L-1}K_{i4}$, i.e., $\sum_{P\in{\cal{P}}}(k_{P}-k_{P}^{'})=\sum_{i=1}^{L-1}K_i-\sum_{i=1}^{L-1}K_{i4}$. So we have $\sum_{i=1}^{L-1}K_i-\sum_{i=1}^{L-1}K_{i4}=K$.
\end{proof}

Based on Lemma \ref{lemm6}, we have
\begin{eqnarray}
{\mbox{rank}}(T({\cal{E}}_{{\Omega_K}}))=\sum_{i=1}^{L-1}{\mbox{rank}}(T_{\Omega_K}^i)=\sum_{i=1}^{L-1}(K_i-K_{i4})=\sum_{i=1}^{L-1}K_i-\sum_{i=1}^{L-1}K_{i4}
\end{eqnarray}
In Lemma \ref{lemm7}, we show that $\sum_{i=1}^{L-1}K_i-\sum_{i=1}^{L-1}K_{i4}=K$ which means rank$(T({\cal{E}}_{{\Omega_K}}))=K$ and this concludes our proof for Theorem \ref{theorem1}.

\end{proof}

Theorem \ref{theorem2} proves that our algorithm terminates in finite time. Theorem \ref{theorem1} proves that our algorithm finds $C$ linearly independent $S-D$ paths (${\cal{P}}$) where $C$ is the unicast capacity of the underlying deterministic relay network. Lemma \ref{lemm9} shows that these $C$ paths in ${\cal{P}}$ correspond to a capacity-achieving transmission scheme. They consist the complete proof of correctness for our algorithm for finding the unicast capacity of any linear layered deterministic wireless relay network.

An arbitrary deterministic relay network ${\cal{G}}$ can be unfolded over time to create a layered deterministic network ${\cal{G}}_L$ through time-expansion technique \cite{amir2007_deterministicmodel}\cite{amir2007_wirelessnetworkinfoflow}. The transmission scheme in ${\cal{G}}_L$ identified by our algorithm corresponds to some equivalent transmission scheme in ${\cal{G}}$ maybe time-variant. It means that our algorithm works for finding the unicast capacity of an arbitrary linear deterministic relay network.

\section{Conclusions\label{conclusion}}

The deterministic channel model for wireless relay networks has been a useful model for studying the capacity and capacity-approaching transmission schemes for underlying networks. In this paper, we proposed a fast algorithm for finding the unicast capacity of any given linear deterministic wireless relay network. Our algorithm finds the maximum number of linearly independent paths for the deterministic relay network and these paths correspond to a capacity-achieving transmission scheme. The essential component of our algorithm is a modified depth-first search algorithm developed for linear deterministic wireless relay networks. The proof of correctness for the algorithm is given which guarantees that our algorithm works in universal cases. Compared with previous results on solving the same problem, our algorithm prevails with significantly lower computational complexity. Moreover, the development of the modified depth-first search algorithm is based on a very intuitive idea, that is, to build up the path by adding edges while avoiding the linear dependency by using rank check at the same time.

\bibliographystyle{IEEEtran}
\bibliography{FastAlgorithmforFindingUnicastCapacityofLinearDeterministicWirelessRelayNetworks}

\begin{thebibliography}{1}
\providecommand{\url}[1]{#1}
\csname url@samestyle\endcsname
\providecommand{\newblock}{\relax}
\providecommand{\bibinfo}[2]{#2}
\providecommand{\BIBentrySTDinterwordspacing}{\spaceskip=0pt\relax}
\providecommand{\BIBentryALTinterwordstretchfactor}{4}
\providecommand{\BIBentryALTinterwordspacing}{\spaceskip=\fontdimen2\font plus
\BIBentryALTinterwordstretchfactor\fontdimen3\font minus
  \fontdimen4\font\relax}
\providecommand{\BIBforeignlanguage}[2]{{%
\expandafter\ifx\csname l@#1\endcsname\relax
\typeout{** WARNING: IEEEtran.bst: No hyphenation pattern has been}%
\typeout{** loaded for the language `#1'. Using the pattern for}%
\typeout{** the default language instead.}%
\else
\language=\csname l@#1\endcsname
\fi
#2}}
\providecommand{\BIBdecl}{\relax}
\BIBdecl

\bibitem{amir2007_deterministicmodel}
A.~S. Avestimehr, S.~N. Diggavi, and D.~N.~C. Tse, ``{A Deterministic Approach
  to Wireless Relay Networks},'' \emph{Proceedings of Allerton Conference on
  Communication, Control and Computing, Illinois}, Sep. 2007.

\bibitem{amir2007_wirelessnetworkinfoflow}
------, ``{Wireless Network Information Flow},'' \emph{Proceedings of Allerton
  Conference on Communication, Control and Computing, Illinois}, Sep. 2007.

\bibitem{aurore2009_combinatorial_algo_deterministic}
A.~Amaudruz and C.~Fragouli, ``{Combinatorial Algorithms for Wireless
  Information Flow},'' \emph{Proceedings of the Twentieth Annual ACM-SIAM
  Symposium on Discrete Algorithms}, pp. 555 -- 564, Jan. 2009.

\bibitem{sadegh2009_combinatorialstudyofdeterministic}
S.~M. S.~T. Yazdi and S.~A. Savari, ``{A Combinatorial Study of Linear
  Deterministic Relay Networks},'' \emph{http://arxiv.org/abs/0904.2401}, Apr.
  2009.

\bibitem{myhomepage}
{http://www.ece.iastate.edu/\~{}cshi}.

\end{thebibliography}

\end{document}